\newcommand{\alg}[1]{\mathfrak{#1}}
\newcommand{\mc}[1]{\mathcal{#1}}
\newcommand{\linspan}{\operatorname{span}}
\newcommand{\bonds}{{\bf B}}
\newcommand{\set}[1]{\left\{ #1 \right\}}
\newcommand{\norm}[1]{\lVert#1\rVert}
\newcommand{\supp}{\operatorname{supp}}
\newcommand{\str}{\operatorname{star}}
\newcommand{\plaq}{\operatorname{plaq}}
\numberwithin{equation}{section}
\theoremstyle{plain}
\newtheorem{theorem}{Theorem}
\numberwithin{theorem}{section}
\newtheorem{lemma}[theorem]{Lemma}
\newtheorem{corollary}[theorem]{Corollary}
\newtheorem{definition}[theorem]{Definition}
\title{Haag duality and the distal split property for cones in the toric code}
\author{Pieter Naaijkens\footnote{Present address: Institut f\"ur Theoretische Physik, Leibniz Universit\"at Hannover, Germany. E-mail: \texttt{pieter.naaijkens@itp.uni-hannover.de}}\\%
Institute for Mathematics, Astrophysics and Particle Physics\\ Radboud University Nijmegen, The Netherlands}
\begin{document}
\maketitle

\begin{abstract}
\noindent We prove that Haag duality holds for cones in the toric code model. That is, for a cone $\Lambda$, the algebra $\mc{R}_{\Lambda}$ of observables localized in $\Lambda$ and the algebra $\mc{R}_{\Lambda^c}$ of observables localized in the complement $\Lambda^c$ generate each other's commutant as von Neumann algebras. Moreover, we show that the distal split property holds: if $\Lambda_1 \subset \Lambda_2$ are two cones whose boundaries are well separated, there is a Type I factor $\mc{N}$ such that $\mc{R}_{\Lambda_1} \subset \mc{N} \subset \mc{R}_{\Lambda_2}$. We demonstrate this by explicitly constructing~$\mc{N}$.
\end{abstract}

\section{Introduction}
For a finite group $G$, Kitaev introduced a quantum mechanical model with excitations described by the representation theory of a certain Hopf algebra, the quantum double of $G$~\cite{MR1951039}. In recent work we studied the superselection structure of the toric code model (where $G=\mathbb{Z}_2$) considered on a plane~\cite{toricendo}, in the spirit of the Doplicher-Haag-Roberts (DHR) program in algebraic quantum field theory~\cite{MR1405610}. Haag duality is an important tool in the DHR analysis, and although it is not necessary for the study of the superselection structure of the toric code~\cite{toricendo}, it does make the analysis more elegant. In the model we consider here, the appropriate formulation is as follows. Consider a cone-like region $\Lambda$, and write $\mc{R}_{\Lambda}$ for the von Neumann algebra generated by the observables localized in $\Lambda$ (in the GNS representation obtained from the ground state). One can then consider all observables localized in the complement $\Lambda^c$ of $\Lambda$, generating an algebra $\mc{R}_{\Lambda^c}$. By locality, i.e. the property that observables localized in disjoint regions commute, one has the inclusion $\mc{R}_{\Lambda^c} \subset \mc{R}_{\Lambda}'$, where the prime denotes the commutant. Haag duality is the statement that the reverse inclusion also holds. 

The \emph{distal split property}, a consequence of Haag duality, is perhaps of greater interest in the present context. The property says that if $\Lambda_1 \subset \Lambda_2$ are two cones whose boundaries are sufficiently well separated, then there is a Type I factor $\mc{R}_{\Lambda_1} \subset \mc{N} \subset \mc{R}_{\Lambda_2}$~\cite{toricendo}. The split property has been studied in a general operator algebraic framework~\cite{MR735338} and has important consequences in the context of algebraic quantum field theory (see e.g. \cite{MR848392}). In this note we present a new proof of the distal split property for the toric code by explicitly constructing an appropriate Type I factor $\mc{N}$.

The distal split property can be interpreted as a strong statistical independence of the regions $\Lambda_1$ and $\Lambda_2^c$. For if it holds, and if normal states $\varphi_1$ (resp. $\varphi_2$) of $\mc{R}_{\Lambda_1}$ (resp. $\mc{R}_{\Lambda_2}'$) are given, then there is a normal state $\varphi$ of $\mc{R}_{\Lambda_1} \vee \mc{R}_{\Lambda_2}'$ such that $\varphi(AB) = \varphi_1(A) \varphi_2(B)$. In other words, one can prepare a state in the region $\Lambda_1$ independently of the state in $\Lambda_2^c$. It is instructive to consider the relation with entanglement. If $\Lambda_1 \subset \Lambda_2$ are two cones whose boundaries are sufficiently well separated, then by the distal split property there are normal product states $\varphi$ as above. For such states there is no violation of Bell's inequalities for the pair $\mc{R}_{\Lambda_1}$, $\mc{R}_{\Lambda_2}$ of observable algebras (see~\cite{MR887998} for a precise formulation of Bell's inequalities in this context). On the other hand, if we choose $\Lambda_1 = \Lambda_2$, then $\mc{R}_{\Lambda_1}$ and $\mc{R}_{\Lambda_2^c}$ are maximally correlated (cf.~\cite{MR887998}). In fact there is \emph{infinite one-copy entanglement} between $\mc{R}_{\Lambda_1}$ and $\mc{R}_{\Lambda_2}$~\cite[Cor. 5.1]{MR2281418}, since $\mc{R}_{\Lambda_1}$ is not of Type I~\cite[Thm. 5.1]{toricendo} and Haag duality holds. It should be noted that the requirement on the separation of the boundaries is actually very weak: often a distance of one is already good enough. In other words, even a small shift of the cone $\Lambda_1$ inside $\Lambda_2$ can have great consequences for the entanglement properties.

As far as the author is aware, currently no general conditions implying Haag duality are known. However, there are proofs in specific cases, for example for certain quantum spin chain models~\cite{MR2281418,MR2605849} or in the setting of algebraic quantum field theory~\cite{MR0438944,MR1147468}. The proofs in the quantum spin chain case make use of the split property, a stronger condition than the \emph{distal} split property we consider in this note. This stronger split property does not hold in the model under consideration, since $\mc{R}_{\Lambda}$ is not of Type I if $\Lambda$ is a cone and we have Haag duality.

In studying commutation problems of von Neumann algebras, such as Haag duality, a natural tool is Tomita--Takesaki modular theory. In algebraic quantum field theory this theory is relevant because of the Reeh-Schlieder Theorem, according to which the vacuum vector is cyclic and separating for the observables localized in a double cone, i.e., the intersection of a forward and backward light cone. Indeed, this has been used to prove duality results, e.g. in~\cite{MR0438944,MR1147468}. In contrast, in the model we are considering, the ground state vector $\Omega$ is not cyclic for the algebra of observables localized in a cone, hence we cannot directly apply these techniques. Our strategy, therefore, is to restrict the algebras to a subspace $\mc{H}_\Lambda$ of the representation space $\mc{H}$, such that $\Omega$ \emph{is} cyclic for (the restriction of) $\mc{R}_\Lambda$. One can also restrict $\mc{R}_{\Lambda^c}$ to this subspace, and using a theorem of Rieffel and van Daele~\cite{MR0383096} one can prove that these restrictions generate each other's commutant as subalgebras of $\alg{B}(\mc{H}_\Lambda)$. The final step is to extend this to the algebras acting on $\mc{H}$. It turns out that similar techniques can be used to prove the distal split property.

In the next section we recall the toric code model as considered on a plane, and fix our notations. Section~\ref{sec:haag} contains a proof of the main result: Haag duality for cones. In the last section, the distal split property is shown to hold by constructing an interpolating Type I factor \emph{explicitly}, in contrast with results in algebraic quantum field theory where the existence follows from abstract arguments.

\section{The model}
\label{sec:model}
We first recall the main features of Kitaev's toric code model~\cite{MR1951039}, considered in the $C^*$-algebraic framework for quantum lattice systems~\cite{MR2345476,toricendo}. Consider a square $\mathbb{Z}^2$ lattice. On each bond of the lattice (an edge between two vertices of distance 1), there is a spin-1/2 degree of freedom. That is, at each bond $b$ the local state space is $\mc{H}_{\set{b}} = \mathbb{C}^2$, with observables $\alg{A}(\{b\}) = M_2(\mathbb{C})$. The set of bonds will be denoted by $\bonds$. If $\Lambda \subset \bonds$ is a finite set, $\alg{A}(\Lambda)$ is the algebra of observables living on the bonds of $\Lambda$. It is the tensor product of the observable algebras acting on the individual bonds of $\Lambda$. If $\Lambda_1 \subset \Lambda_2$ there is an obvious inclusion of corresponding algebras, obtained by identifying $\mc{H}_{\Lambda_2} \cong \mc{H}_{\Lambda_1} \otimes \mc{H}_{\Lambda_2 \setminus \Lambda_1}$. This defines a local net of algebras with respect to the inclusion $\alg{A}(\Lambda_1) \hookrightarrow \alg{A}(\Lambda_2)$ for $\Lambda_1 \subset \Lambda_2$. Define the algebra of local observables,
\[
	\alg{A}_{loc} = \bigcup_{\Lambda_f \subset \bonds}  \alg{A}(\Lambda_f),
\]
where the union is over the finite subsets $\Lambda_f$ of $\bonds$. The algebra $\alg{A}$ of quasi-local observables is the completion of $\alg{A}_{loc}$ in the norm topology, turning it into a $C^*$-algebra. Equivalently, one can see it as the inductive limit of the net $\Lambda \mapsto \alg{A}(\Lambda)$ in the category of $C^*$-algebras. Note that $\alg{A}$ is a uniformly hyperfinite (UHF) algebra~\cite{MR887100}. The algebra of observables localized in an arbitrary subset $\Lambda$ of $\bonds$ is defined as
\[
\alg{A}(\Lambda) = \overline{\bigcup_{\Lambda_f \subset \Lambda} \alg{A}(\Lambda_f)}^{\norm{\cdot}},
\]
where the union is again over finite subsets. An operator $A$ is said to have support in $\Lambda$, or to be localized in $\Lambda$, if $A \in \alg{A}(\Lambda)$. The set $\supp(A) \subset \bonds$ is the smallest subset in which $A$ is localized. 

\begin{figure}
  \begin{center}
  \includegraphics{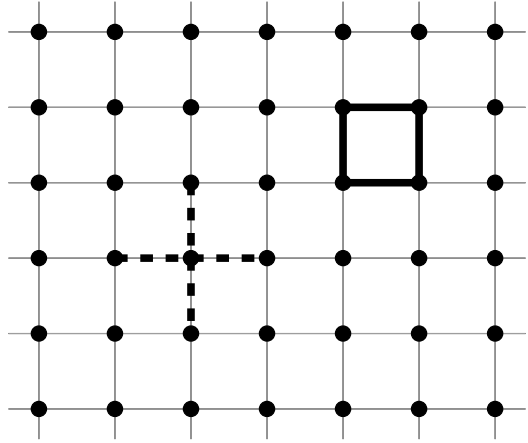}
\end{center}
\caption{The $\mathbb{Z}^2$ lattice. The gray bonds each carry a spin-1/2 degree of freedom. A star (dashed lines) and plaquette (thick lines) are shown.}
\label{fig:lattice}
\end{figure}
The Hamiltonian of Kitaev's model is defined in terms of plaquette and star operators, each supported on four bonds (see Figure~\ref{fig:lattice}). If $s$ is a point on the lattice, $\str(s)$ denotes the star based at $s$. Similarly, $\plaq(p)$ is the set of bonds enclosing a plaquette $p$. The corresponding star and plaquette operators are given by
\[
A_s = \bigotimes_{j \in \str(s)} \sigma_j^x, \qquad B_p = \bigotimes_{j \in \plaq(p)} \sigma_j^z,
\]
where the tensor product is understood as having Pauli matrices $\sigma^x$ (resp. $\sigma^z$) in places $j$, and unit operators in all other positions. It is then straightforward to check that for all stars $s$ and plaquettes $p$, we have
\[
	[A_s, B_p] = 0.
\]
These operators are used to define the local Hamiltonians. If $\Lambda_f \subset \bonds$ is finite, the associated local Hamiltonian is
\[
H_{\Lambda_f} = -\sum_{\str(s) \subset \Lambda_f}  A_s -\sum_{\plaq(p) \subset \Lambda_f} B_p.
\]
The model with dynamics described by these Hamiltonians has a unique ground state $\omega$,  and in the corresponding GNS representation the dynamics is implemented by a Hamiltonian with gap~\cite{MR2345476,toricendo}. This ground state is determined by the condition $\omega(A_s) = \omega(B_p) = 1$ for any star (resp. plaquette) operator $A_s$ (resp. $B_p$). The following Lemma, can be used to compute the value of the ground state on other operators.

\begin{lemma}
	\label{lem:state}
	Let $\omega$ be a state on a $C^*$-algebra $\alg{A}$, and suppose $X=X^*$ such that $X \leq I$ and $\omega(X) = 1$. Then $\omega(XY) = \omega(YX) = \omega(Y)$ for any $Y \in \alg{A}$.
\end{lemma}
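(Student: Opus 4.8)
The plan is to isolate the ``defect'' $P := I - X$ and show that it is killed by $\omega$ in a strong, one-sided sense. First I would record the two facts that make the hypotheses bite: from $X \leq I$ we get $P \geq 0$, and from $\omega(X) = 1 = \omega(I)$ we get $\omega(P) = \omega(I) - \omega(X) = 0$. The whole statement then reduces to proving that
\[
	\omega(PY) = \omega(YP) = 0 \qquad \text{for every } Y \in \alg{A},
\]
because, granting this, $\omega(XY) = \omega(Y) - \omega(PY) = \omega(Y)$ and symmetrically $\omega(YX) = \omega(Y) - \omega(YP) = \omega(Y)$.

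The heart of the matter is the Cauchy--Schwarz inequality for the positive semidefinite sesquilinear form $(A,B) \mapsto \omega(A^*B)$ carried by any state. Since $P \geq 0$ and $\alg{A}$ is unital, continuous functional calculus supplies a positive square root $P^{1/2} \in \alg{A}$ with $(P^{1/2})^2 = P$. Applying Cauchy--Schwarz with $A = P^{1/2}$ and $B = P^{1/2} Y$, so that $A^*B = PY$, I would obtain
\[
	|\omega(PY)|^2 = |\omega(A^*B)|^2 \leq \omega(A^*A)\,\omega(B^*B) = \omega(P)\,\omega(Y^*PY) = 0,
\]
whence $\omega(PY) = 0$ for all $Y$. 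The right-handed identity then costs nothing extra: using $P = P^*$ and $\omega(Z^*) = \overline{\omega(Z)}$, one has $\omega(YP) = \overline{\omega(PY^*)} = 0$, the inner expression being the previous step applied to $Y^*$.

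I do not expect a genuine obstacle here, as the lemma is really a rigidity statement for the equality case $\omega(X) = \omega(I)$ under $X \leq I$: it says $X$ fixes the GNS cyclic vector. Indeed, the computation above is equivalent to the observation that $\|\pi(P^{1/2})\Omega\|^2 = \omega(P) = 0$ forces $\pi(X)\Omega = \Omega$, from which both identities read off immediately as $\omega(XY) = \langle \pi(X)\Omega, \pi(Y)\Omega\rangle = \omega(Y)$. The only points deserving a line of care are the membership of the square root in $\alg{A}$ (where unitality and the self-adjointness $X = X^*$ are used) and the passage from $\omega(P) = 0$ to the vanishing of the form on $P^{1/2}$; everything else is formal.
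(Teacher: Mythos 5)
Your proof is correct and follows essentially the same route as the paper, which does not spell out an argument but states that the lemma ``follows from the Cauchy--Schwarz inequality'' (citing \cite[\S 2.1.1]{MR2345476}): your reduction to $P = I - X \geq 0$ with $\omega(P)=0$, followed by Cauchy--Schwarz applied to $P^{1/2}$ and $P^{1/2}Y$, is precisely the standard argument being referenced, and your GNS reformulation $\pi(X)\Omega = \Omega$ matches how the paper actually uses the lemma in equation~\eqref{eq:stplaq}.
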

This lemma follows from the Cauchy-Schwarz inequality (see~\cite[\S 2.1.1]{MR2345476} for a proof).

We write $(\pi, \Omega, \mc{H})$ for the GNS representation obtained from the ground state $\omega$. An easy calculation shows that $\omega( (A_s-I)^*(A_s-I) ) = 0$ for any star $s$. A similar result holds for the plaquette operators $B_p$, hence
\begin{equation}
	\pi(A_s) \Omega = \Omega, \quad \pi(B_p) \Omega = \Omega.
	\label{eq:stplaq}
\end{equation}
This relation will be useful later.

\begin{figure}
  \begin{center}
  \includegraphics{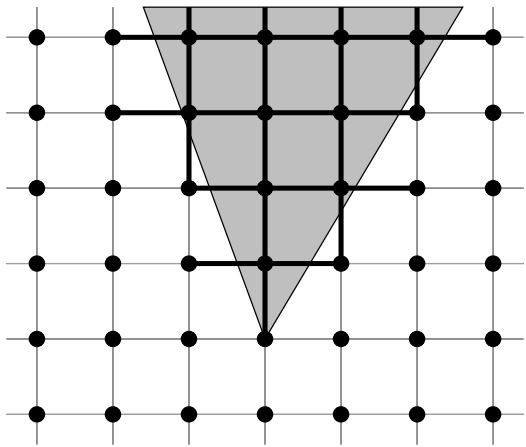}
\end{center}
\caption{Example of a cone (bold bonds). The shaded region is the area bounded by two rays emanating from a point.}
\label{fig:cone}
\end{figure}
We are mainly interested in (quasi)local observables localized in certain unbounded cone-like regions. An example is provided in Figure~\ref{fig:cone}. The precise definition is given below.
\begin{definition}
	Consider a point on the lattice $\mathbb{Z}^2$, with two rays emanating from it, such that the angle between those rays is positive but smaller than $\pi$. These two rays bound a convex subset of $\mathbb{R}^2$. A \emph{cone} $\Lambda \subset \bonds$ consists of all bonds that intersect the \emph{interior} of this convex area. 
\end{definition}

Next we consider paths. Let $x,y$ be two points in the lattice $\mathbb{Z}^2$. One can consider finite paths consisting of bonds between these points. Similarly, one can consider two plaquettes, or equivalently, two points on the dual lattice. A \emph{dual path} is a path on the dual lattice between two points. We identify such a dual path $\widehat{\xi}$ with the bonds $b \in \bonds$ that are crossed by this dual path. Corresponding with such paths there are string operators.\footnote{Note that we use notation different from Ref.~\cite{toricendo}.}

\begin{definition}
Suppose $\xi$ (resp. $\widehat{\xi}$) is a finite path on the lattice (resp. dual lattice). We define the corresponding \emph{string operators} by
\[
F_\xi := \bigotimes_{i \in \xi} \sigma_i^z, \quad F_{\widehat{\xi}} := \bigotimes_{i \in \widehat{\xi}} \sigma_i^x.
\]
\end{definition}
We will usually not distinguish between string operators corresponding to paths and those corresponding to dual paths. Note that by the properties of Pauli matrices, it is clear that string operators are self-adjoint, and that if $F_1, F_2$ are string operators, they either commute or anti-commute.

Now suppose that $\xi$ is a path that does not intersect itself. Then one sees that $F_\xi$ commutes with all star operators $A_s$, except for those corresponding to the star based at the endpoints of $\xi$. Clearly $F_\xi$ commutes with all plaquette operators. Considering the definition of the local Hamiltonians, $F_\xi \Omega$ can be interpreted as a state vector describing a pair of excitations at the endpoints of $\xi$. A similar argument holds for paths on the dual lattice, where the excitations are located at plaquettes, and we have anti-commutation with the corresponding plaquette operators.

Recall that if $\xi$ is a closed path, the corresponding operator $F_\xi$ can be written as a product of plaquette operators~\cite{toricendo}, hence $\omega(F_\xi) = 1$ by Lemma~\ref{lem:state}. Similarly, if $\xi$ is a closed dual path, $F_\xi$ is a product of star operators. From this it follows that $\pi(F_\xi) \Omega = \Omega$ for closed paths $\xi$. As an easy consequence, consider  two paths $\xi$ and $\xi'$ with the same endpoints. Then we have $\omega( (F_{\xi}-F_{\xi'})^*(F_{\xi}-F_{\xi'})) = 0$, because the cross-term $F_{\xi}F_{\xi'}$ is precisely the string operator corresponding to the loop formed by $\xi$ and $\xi'$. Hence $\pi(F_{\xi}) \Omega = \pi(F_{\xi'}) \Omega$. In physical terms this means that the excitations created do not depend on the path $\xi$, but only on its endpoints.

As we will see later, we will investigate excitations that appear near the edges of a cone $\Lambda$. Recall that a cone is described by two rays. These lines allow us to define what exactly is the boundary of a cone.
\begin{definition}
A vertex $v$ lies on the boundary of $\Lambda$ if and only if either $v$ lies on one of the two rays or $v$ lies outside the convex area bounded by the two rays and is one of the endpoints of a bond $b \in \Lambda$. A plaquette $p$ is at the boundary of $\Lambda$ if and only if some, but not all, bonds that enclose the plaquette are contained in $\Lambda$. The boundary of the complement $\Lambda^c$ of a cone is defined to be equal to the boundary of $\Lambda$.
\end{definition}

\section{Haag duality}\label{sec:haag}
Recall that $\pi$ is the GNS representation defined by the ground state. Suppose that $\Lambda$ is a cone. We can consider the von Neumann algebra generated by the observables localized in this cone, $\mc{R}_\Lambda := \pi(\alg{A}(\Lambda))''$, and similarly the algebra $\mc{R}_{\Lambda^c} := \pi(\alg{A}(\Lambda^c))''$ generated by observables localized in the complement of $\Lambda$. From locality it follows that $\mc{R}_\Lambda \subset \mc{R}_{\Lambda^c}'$. \emph{Haag duality} is the statement that the reverse inclusion is also true, i.e.
\begin{equation}
	\pi(\alg{A}(\Lambda))'' = \pi(\alg{A}(\Lambda^c))'.
	\label{eq:haagd}
\end{equation}
Our main result is that this is the case for the toric code model.
\begin{theorem}
	\label{thm:dual}
Let $\Lambda$ be a cone. Then in the ground state representation we have Haag duality, $\pi(\alg{A}(\Lambda))'' = \pi(\alg{A}(\Lambda^c))'$.
\end{theorem}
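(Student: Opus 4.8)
The plan is to reduce Haag duality, in the equivalent form $\mc{R}_\Lambda' = \mc{R}_{\Lambda^c}$, to a commutation theorem of Rieffel and van Daele~\cite{MR0383096} applied on a suitable subspace of $\mc{H}$. Locality gives $\mc{R}_{\Lambda^c}\subseteq\mc{R}_\Lambda'$, so only the reverse inclusion $\mc{R}_\Lambda'\subseteq\mc{R}_{\Lambda^c}$ is at issue; taking commutants this yields the theorem. Working directly on $\mc{H}$ is obstructed by the fact that $\Omega$ is not cyclic for $\mc{R}_\Lambda$: a string operator escaping to infinity inside $\Lambda^c$ creates excitations that no observable localized in $\Lambda$ can produce. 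I would therefore pass to $\mc{H}_\Lambda := \overline{\mc{R}_\Lambda\Omega}$, with orthogonal projection $P_\Lambda$. Since $\mc{H}_\Lambda$ is $\mc{R}_\Lambda$-invariant we have $P_\Lambda\in\mc{R}_\Lambda'$, the restriction $\widehat{\mc{R}}_\Lambda := \mc{R}_\Lambda|_{\mc{H}_\Lambda}$ has $\Omega$ as cyclic vector by construction, and reduction theory identifies its commutant, $(\widehat{\mc{R}}_\Lambda)' = P_\Lambda\mc{R}_\Lambda' P_\Lambda|_{\mc{H}_\Lambda}$. The compressions $P_\Lambda\mc{R}_{\Lambda^c}P_\Lambda$ land inside $(\widehat{\mc{R}}_\Lambda)'$, and the goal on $\mc{H}_\Lambda$ is to show they generate all of it.

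For this I would use the bounded-operator form of modular theory. To each algebra one associates the real subspace it generates from $\Omega$, namely $\mc{K}_\Lambda := \overline{\{A\Omega : A=A^*\in\widehat{\mc{R}}_\Lambda\}}^{\mathbb{R}}$ and analogously $\mc{K}_{\Lambda^c}$ from $P_\Lambda\mc{R}_{\Lambda^c}P_\Lambda$, and the Rieffel--van Daele criterion reduces the commutation statement to showing that these are symplectic complements, $\mc{K}_{\Lambda^c} = \mc{K}_\Lambda^{\flat}$, where $\mc{K}^{\flat} := \{\eta\in\mc{H}_\Lambda : \operatorname{Im}\langle\eta,\xi\rangle = 0 \text{ for all }\xi\in\mc{K}\}$. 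This is exactly where the toric-code structure is decisive. Both subspaces are spanned by vectors $\pi(F_\xi)\Omega$ for (dual) string operators; $\pi(F_\xi)\Omega$ depends only on the endpoints of $\xi$, and closed (dual) loops fix $\Omega$. Consequently the symplectic form $\operatorname{Im}\langle\pi(F_\xi)\Omega,\pi(F_{\xi'})\Omega\rangle$ is governed entirely by whether $F_\xi$ and $F_{\xi'}$ commute or anticommute, i.e.\ by the parity of the number of crossings of the underlying paths. The identity $\mc{K}_{\Lambda^c}=\mc{K}_\Lambda^{\flat}$ then becomes a purely combinatorial assertion: a string in $\Lambda^c$ pairs trivially with every string in $\Lambda$ because such paths can meet only across the boundary, and, conversely, any vector pairing trivially with all of $\mc{K}_\Lambda$ is represented by a string that can be deformed entirely into $\Lambda^c$. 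Proving this second, ``no extra commutant'' direction for the non-compact geometry of a cone is the real content of Haag duality, since it is precisely here that one must rule out contributions from semi-infinite strings running along the cone.

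A subtlety to be handled is that $\Omega$ is not separating for $\widehat{\mc{R}}_\Lambda$: for a star $s$ deep inside $\Lambda$ the relation $\pi(A_s)\Omega=\Omega$ produces a nonzero element $\pi(A_s)-I$ of $\widehat{\mc{R}}_\Lambda$ annihilating $\Omega$, so the modular operator is not directly available and the real subspaces need not meet only in $\{0\}$. The flexibility of the bounded-operator approach is what makes the argument go through, and the bookkeeping is eased by an asymmetry of the two algebras on $\mc{H}_\Lambda$: a star or plaquette operator localized deep inside $\Lambda^c$ commutes with every generator of $\mc{H}_\Lambda$ and therefore acts as the identity on $\mc{H}_\Lambda$, so those particular relations are invisible on the complement side. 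Verifying the precise Rieffel--van Daele hypotheses --- cyclicity of $\Omega$ and the density of $\mc{K}_\Lambda+\mc{K}_{\Lambda^c}$, with the non-separating directions correctly tracked --- is a step that needs care but that I expect to be manageable.

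The final and, I expect, most delicate step is to promote the equality $(\widehat{\mc{R}}_\Lambda)' = P_\Lambda\mc{R}_{\Lambda^c}P_\Lambda|_{\mc{H}_\Lambda}$ on $\mc{H}_\Lambda$ to $\mc{R}_\Lambda'=\mc{R}_{\Lambda^c}$ on $\mc{H}$. Here I would use that $P_\Lambda$ has central support $I$ in $\mc{R}_\Lambda'$ --- which follows because $\overline{\mc{R}_\Lambda'\mc{R}_\Lambda\Omega}=\mc{H}$, $\Omega$ being cyclic for $\pi(\alg{A})$ and hence for $\mc{R}_\Lambda\vee\mc{R}_{\Lambda^c}=\pi(\alg{A})''$ --- so that no element of $\mc{R}_\Lambda'$ is lost under compression by $P_\Lambda$. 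Reconciling this reduction-theoretic statement with the fact that the compression map on $\mc{R}_\Lambda'$ is not multiplicative, and upgrading the subspace-level identification of the generated algebras to the genuine operator identity on all of $\mc{H}$, is where the interplay between the non-compact cone geometry and the non-separating vector is most demanding. I anticipate that this lifting, rather than the abstract operator-algebraic input, will be the main obstacle, with the combinatorics of path crossings at the unbounded cone boundary as the conceptual heart on which it rests.
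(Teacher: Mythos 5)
Your overall architecture coincides with the paper's: restrict everything to $\mc{H}_\Lambda = \overline{\mc{R}_\Lambda \Omega}$, verify the Rieffel--van Daele criterion by the string-operator combinatorics (anticommuting star/plaquette operators, excitations forced to the boundary, deformation of paths across the boundary, even intersection parity of loops with dual loops --- all of which is exactly the content of the paper's Lemma~\ref{lem:dense}), and then lift back to $\mc{H}$. But there is a genuine gap precisely at the step you yourself flag as the main obstacle, and your proposed mechanism for it does not work. You record only the trivial fact $P_\Lambda \in \mc{R}_\Lambda'$, and your lifting tool --- that $P_\Lambda$ has central support $I$ in $\mc{R}_\Lambda'$ --- controls the wrong map: central support $I$ makes the \emph{induction} $\mc{R}_\Lambda \ni A \mapsto A|_{\mc{H}_\Lambda}$ injective (this is the second half of the paper's Lemma~\ref{lem:unique}), but what your route requires is to invert the \emph{reduction} $\mc{R}_\Lambda' \ni T \mapsto P_\Lambda T P_\Lambda|_{\mc{H}_\Lambda}$, which is neither multiplicative nor injective regardless of central supports (its kernel contains the corner $(I-P_\Lambda)\mc{R}_\Lambda'(I-P_\Lambda)$). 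So from ``the compression of $T$ lies in the von Neumann algebra generated by the compressions of $\mc{R}_{\Lambda^c}$'' one cannot conclude $T \in \mc{R}_{\Lambda^c}$, and no bookkeeping of non-separating directions repairs this as stated.

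The missing idea is the paper's Lemma~\ref{lem:commutant}: $\alg{A}(\Lambda^c)'\, \mc{H}_\Lambda \subseteq \mc{H}_\Lambda$, equivalently $P_\Lambda \in \mc{R}_{\Lambda^c}$ (Corollary~\ref{cor:projection}) --- proved by the very boundary combinatorics you deploy only inside the density step: if no star or plaquette operator in $\Lambda^c$ anticommutes with $\widehat{F}_1 \cdots \widehat{F}_k$, then all excitations sit on the boundary of $\Lambda$ and the strings can be replaced by strings inside $\Lambda$ acting identically on $\Omega$. This fact upgrades $P_\Lambda \mc{R}_{\Lambda^c} P_\Lambda$ from a mere set of compressions to the honest reduced von Neumann algebra $(\mc{R}_{\Lambda^c})_{P_\Lambda}$, whose commutant on $\mc{H}_\Lambda$ is the (now well-defined and multiplicative) restriction of $\mc{R}_{\Lambda^c}'$. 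One then lifts in the direction opposite to yours: given $B' \in \mc{R}_{\Lambda^c}'$, its restriction lies in $\mc{B}_\Lambda' = \mc{A}_\Lambda$ by Rieffel--van Daele, hence equals the restriction of a unique $\widehat{A} \in \mc{R}_\Lambda$, and on the dense set of vectors $\widehat{F} F \Omega$ one computes $B' \widehat{F} F \Omega = \widehat{F} B' F \Omega = \widehat{F} \widehat{A} F \Omega = \widehat{A} \widehat{F} F \Omega$, giving $B' = \widehat{A} \in \mc{R}_\Lambda$, i.e.\ $\mc{R}_{\Lambda^c}' \subseteq \mc{R}_\Lambda$, which together with locality is Haag duality. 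Your combinatorial heart is sound; redirect it to proving the invariance of $\mc{H}_\Lambda$ under $\alg{A}(\Lambda^c)'$, and the lifting you anticipated as the hardest step becomes a short, routine computation.
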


The basic idea behind the proof is to first reduce the problem to one of algebras acting on a Hilbert space $\mc{H}_\Lambda \subset \mc{H}$. This Hilbert subspace can be interpreted as the space of states with excitations localized in $\Lambda$. Before we proceed, first note that $\pi$ is a representation of a UHF (and hence simple) algebra, from which it is clear that $\pi$ is faithful. This makes it possible to identify $\pi(A)$ with $A$, for $A \in \alg{A}$, and we will do so from now on.

\begin{definition}
	Let $\Lambda$ be a cone. If $\xi$ is a path on the lattice, we say that it is contained in $\Lambda$ if $\xi \subset \Lambda$. A path $\xi$ on the dual lattice is contained in $\Lambda$ if each bond that intersects the dual path is in $\Lambda$. With this convention, we define 
\[	
	\mc{F}_\Lambda = \{ F_\xi : \xi\textrm{ is a path (or dual path) in } \Lambda \},
\]
and similarly for $\mc{F}_{\Lambda^c}$.
\end{definition}
The operators in $\mc{F}_\Lambda$ create excitations in $\Lambda$. Since $\Lambda \cup \Lambda^c = \bonds$, one would expect that the operators in $\mc{F}_\Lambda$ and $\mc{F}_{\Lambda^c}$ generate $\mc{H}$ by acting on the ground state vector $\Omega$. This is indeed the case:
\begin{lemma}
	\label{lem:denseset}
The closure of $\linspan \{ F_1 \cdots F_m \widehat{F}_1 \cdots \widehat{F}_n \Omega : F_i \in \mc{F}_\Lambda, \widehat{F}_j \in \mc{F}_{\Lambda^c} \}$ is equal to $\mc{H}$.
\end{lemma}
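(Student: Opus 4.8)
The plan is to reduce the statement to the cyclicity of $\Omega$ for the full quasi-local algebra. Since $(\pi,\Omega,\mc{H})$ is the GNS representation of the ground state $\omega$, the vector $\Omega$ is cyclic for $\alg{A}$, so $\overline{\alg{A}_{loc}\Omega}=\mc{H}$. Because $\alg{A}_{loc}$ is the linear span of finite tensor products of Pauli matrices (Pauli words), it suffices to show that every vector $C\Omega$, with $C$ such a word, already lies in the linear span appearing in the statement. Then the closure of that span contains $\alg{A}_{loc}\Omega$ and hence all of $\mc{H}$, while the reverse inclusion is trivial.

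First I would make the elementary observation that a single bond $b$ is a path of length one on the lattice, and the dual edge crossing $b$ is a dual path of length one. Hence $\sigma_b^z=F_\xi$ and $\sigma_b^x=F_{\widehat{\xi}}$ are string operators with $\supp$ equal to $\{b\}$, and using $\sigma_b^y=i\,\sigma_b^x\sigma_b^z$ every single-bond Pauli matrix is, up to a scalar, a product of at most two such length-one string operators. If $b\in\Lambda$ these operators lie in $\mc{F}_\Lambda$, and if $b\in\Lambda^c$ they lie in $\mc{F}_{\Lambda^c}$.

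Next, given a Pauli word $C$ supported on a finite set $\Lambda_f\subset\bonds$, I would split $\Lambda_f$ into $\Lambda_f\cap\Lambda$ and $\Lambda_f\setminus\Lambda$. Since $\Lambda\cup\Lambda^c=\bonds$, the second piece lies in $\Lambda^c$, and as the two pieces are disjoint the corresponding tensor factors $C_1$ and $C_2$ commute and satisfy $C=C_1C_2$. By the previous step $C_1$ is, up to a scalar, a product of operators from $\mc{F}_\Lambda$ and $C_2$ a product of operators from $\mc{F}_{\Lambda^c}$; writing all the $\Lambda$-factors first produces exactly the prescribed form $F_1\cdots F_m\widehat{F}_1\cdots\widehat{F}_n$, so $C\Omega$ is a scalar multiple of one of the generating vectors. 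The only points needing care are bookkeeping ones, namely that every bond falls into $\Lambda$ or $\Lambda^c$ so that no tensor factor is left out, and that the prescribed left-to-right ordering is respected; both are handled by the disjoint splitting. I do not anticipate a genuine analytic obstacle here: the content is entirely the identification of single-bond Paulis with length-one strings, combined with GNS cyclicity of $\Omega$.
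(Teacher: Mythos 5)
Your proposal is correct and follows essentially the same route as the paper: identify the single-bond Pauli operators $\sigma^z_b$, $\sigma^x_b$ (and $\sigma^y_b$ up to a scalar) with length-one string and dual-string operators, so that products of elements of $\mc{F}_\Lambda$ and $\mc{F}_{\Lambda^c}$ span all local operators, and then invoke density of $\alg{A}_{loc}$ in $\alg{A}$ together with GNS cyclicity of $\Omega$. Your explicit splitting of a Pauli word into commuting factors supported in $\Lambda$ and $\Lambda^c$ merely spells out the paper's remark ``by considering more bonds'' and the required ordering, so there is no substantive difference.
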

\begin{proof}
	Let $b \in \bonds$ and consider the path $\xi = \{ b \}$ and the dual path $\widehat{\xi}$ of length one crossing this bond. Then $I, F_\xi, F_{\widehat{\xi}}$ and $F_\xi F_{\widehat{\xi}}$ span the algebra $M_2(\mathbb{C})$ acting on this bond. By considering more bonds, one sees that all local operators can be obtained in this way, from which the statement follows since the local operators are dense in $\alg{A}$, and $\Omega$ is cyclic for $\pi(\alg{A})$ by the GNS construction.
\end{proof}

Next we consider the Hilbert space of all excitations localized in $\Lambda$.
\begin{definition}
	Consider the closure of $\linspan \{F_1 \cdots F_k \Omega : F_i \in \mc{F}_\Lambda \}$ and let $P_\Lambda$ be the projection onto this subspace of $\mc{H}$. We write $\mc{H}_\Lambda$ for the Hilbert space $\mc{H}_\Lambda = P_\Lambda \mc{H}$.
\end{definition}

\begin{lemma}
	\label{lem:unique}
	We have $\alg{A}(\Lambda) \mc{H}_\Lambda \subset \mc{H}_\Lambda$. In fact, $A \in \alg{A}(\Lambda)''$ is completely determined by its restriction to $\mc{H}_\Lambda$. 
\end{lemma}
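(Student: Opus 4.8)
The plan is to treat the two assertions in turn; the invariance is routine and the essential point is faithfulness of the restriction, which I would obtain from a central-carrier computation. Write $\mc{M} = \alg{A}(\Lambda)''$ and let $\mc{A}^0$ denote the linear span of finite products $F_1\cdots F_k$ with $F_i \in \mc{F}_\Lambda$. Exactly as in the proof of Lemma~\ref{lem:denseset}, for a bond $b \in \Lambda$ the path $\xi=\{b\}$ and the dual path $\widehat{\xi}$ crossing it both lie in $\Lambda$, so $I, F_\xi, F_{\widehat{\xi}}, F_\xi F_{\widehat{\xi}} \in \mc{A}^0$ and span $M_2(\mathbb{C})$ on that bond; taking products over finite subsets of $\Lambda$ shows that $\mc{A}^0$ is a unital $*$-subalgebra, norm-dense in $\alg{A}(\Lambda)$. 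By construction $\mc{H}_\Lambda = \overline{\mc{A}^0 \Omega} = \overline{\alg{A}(\Lambda)\Omega}$.

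For the inclusion $\alg{A}(\Lambda)\mc{H}_\Lambda \subset \mc{H}_\Lambda$ I would argue that $\mc{A}^0$, being an algebra, satisfies $A(B\Omega) = (AB)\Omega \in \mc{A}^0\Omega$ for $A,B \in \mc{A}^0$; boundedness of $A$ then yields $A\mc{H}_\Lambda \subset \mc{H}_\Lambda$, and a norm approximation of an arbitrary element of $\alg{A}(\Lambda)$ by elements of $\mc{A}^0$ extends this to all of $\alg{A}(\Lambda)$. The same computation shows that $\overline{\alg{A}(\Lambda)\Omega}$ is $\alg{A}(\Lambda)$-invariant, so the projection $P_\Lambda$ onto $\mc{H}_\Lambda$ lies in $\alg{A}(\Lambda)' = \mc{M}'$. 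In particular $P_\Lambda$ commutes with $\mc{M}$, so $\mc{M}$ leaves $\mc{H}_\Lambda$ invariant and $P_\Lambda = [\mc{M}\Omega]$ is the cyclic projection of $\Omega$ for $\mc{M}$.

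For the second assertion I would use the standard fact that, for a projection $P \in \mc{M}'$, the restriction map $A \mapsto A|_{P\mc{H}}$ is a faithful normal $*$-homomorphism on $\mc{M}$ precisely when the central carrier of $P$ in $\mc{M}'$ equals $I$: if $AP = 0$ then $A U'PU'^* = U'APU'^* = 0$ for every unitary $U' \in \mc{M}'$, so $A$ annihilates $\bigvee_{U'} U'P\mc{H}$, which is exactly the range of that central carrier. Since $\bigvee_{U'} U'P_\Lambda U'^*$ projects onto $\overline{\mc{M}\mc{M}'\Omega}$ (using $\mc{H}_\Lambda = \overline{\mc{M}\Omega}$ and that the unitaries span $\mc{M}'$), the whole problem reduces to showing that $\mc{M}\mc{M}'\Omega$ is dense in $\mc{H}$.

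This density is the one place where the geometry is used, and I expect it to be the crux of the argument rather than a genuine obstacle. By locality $\alg{A}(\Lambda^c) \subset \alg{A}(\Lambda)' = \mc{M}'$, so any vector $F_1\cdots F_m \widehat{F}_1\cdots\widehat{F}_n\Omega$ with $F_i \in \mc{F}_\Lambda$ and $\widehat{F}_j \in \mc{F}_{\Lambda^c}$ lies in $\mc{M}\mc{M}'\Omega$, the $\Lambda$-string block sitting in $\alg{A}(\Lambda)\subset\mc{M}$ and the $\Lambda^c$-string block in $\alg{A}(\Lambda^c)\subset\mc{M}'$. By Lemma~\ref{lem:denseset} these vectors are total in $\mc{H}$, so $\overline{\mc{M}\mc{M}'\Omega} = \mc{H}$ and the central carrier of $P_\Lambda$ is $I$; hence the restriction to $\mc{H}_\Lambda$ is faithful on $\mc{M}$, which is the claim. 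The conceptual subtlety worth flagging is that $\Omega$ is \emph{not} cyclic for $\mc{M}$ on all of $\mc{H}$, so faithfulness cannot be read off from cyclicity of $\Omega$; it is the size of the commutant $\mc{M}'$ (through $\alg{A}(\Lambda^c)$) together with Lemma~\ref{lem:denseset} that forces the central carrier to be maximal.
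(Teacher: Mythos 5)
Your proof is correct; both halves go through. The invariance argument is the same as the paper's (the paper likewise notes that $\alg{A}(\Lambda)_{loc}$ is generated by the string operators of paths and dual paths in $\Lambda$, exactly your bond-by-bond spanning argument), so the interesting comparison concerns the second claim. The paper gives two arguments there: an abstract one, invoking the fact that $\alg{A}(\Lambda)''$ is a \emph{factor} (imported from Ref.~\cite{toricendo}) together with the standard fact that $AB=0$ with $A \in \mc{R}$, $B \in \mc{R}'$, $\mc{R}$ a factor, forces $A=0$ or $B=0$, applied to $B = P_\Lambda$; and a direct one, showing $A_1 \eta = A_2 \eta$ on vectors $\eta = \widehat{F}_1 \cdots \widehat{F}_m F_1 \cdots F_n \Omega$ by commuting $A_i$ past the $\Lambda^c$-strings and citing Lemma~\ref{lem:denseset}. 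Your central-carrier computation sits precisely between the two. On the one hand it dispenses with factoriality, so the lemma becomes self-contained modulo Lemma~\ref{lem:denseset}, and it isolates the exact abstract mechanism: faithfulness of induction by $P_\Lambda \in \mc{M}'$ is equivalent to $C_{P_\Lambda} = I$, i.e.\ to cyclicity of $\Omega$ for $\mc{M} \vee \mc{M}'$ --- which also makes transparent, as you rightly flag, why the failure of cyclicity of $\Omega$ for $\mc{M}$ alone is harmless. On the other hand, your verification of the density $\overline{\mc{M}\mc{M}'\Omega} = \mc{H}$ is the paper's direct computation in abstract garb: the reordering of the $\Lambda$- and $\Lambda^c$-blocks that you use tacitly holds exactly (they have disjoint supports, so they commute on the nose), and then Lemma~\ref{lem:denseset} does the work, just as in the paper. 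What your route buys is independence from the external factoriality input and a statement of the general principle; what the paper's factor argument buys is brevity and the stronger conclusion that induction by \emph{any} nonzero projection in $\mc{R}_\Lambda'$, not just $P_\Lambda$, is faithful on $\mc{R}_\Lambda$.
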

\begin{proof}
	The algebra $\alg{A}(\Lambda)_{loc}$ is generated by operators $F_\xi$ for paths (and dual paths) $\xi$ contained in $\Lambda$. Such operators clearly map the linear subspace spanned by vectors of the form $F_1 \cdots F_k \Omega$ ($F_i \in \mc{F}_\Lambda$) into itself. Since this space is dense in $\mc{H}_\Lambda$, and $\alg{A}(\Lambda)_{loc}$ is dense in $\alg{A}(\Lambda)$, the first claim follows.

The second claim follows from the fact that if $AB=0$ for $A \in \mc{R}$ with $\mc{R}$ a factor, and $B \in \mc{R}'$, then either $A$ or $B$ is zero~\cite[Thm. 5.5.4]{MR719020}. Since $\alg{A}(\Lambda)''$ is a factor~\cite{toricendo} and $P_\Lambda \in \alg{A}(\Lambda)'$ by the previous part, the result follows. There is also an easy direct proof. We give it here since we will use a similar argument later on. Let $A_1, A_2 \in \alg{A}(\Lambda)$ and suppose that $A_1 \xi = A_2\xi$ for every $\xi \in \mc{H}_\Lambda$. Now consider $\eta =\widehat{F}_1 \cdots \widehat{F}_m F_1 \cdots F_n \Omega \in \mc{H}$, where again $F_i \in \mc{F}_\Lambda$ and $\widehat{F}_j \in \mc{F}_{\Lambda^c}$. Then we have
\[
A_1 \eta = \widehat{F}_1 \cdots \widehat{F}_m A_1  F_1 \cdots F_n \Omega  = \widehat{F}_1 \cdots \widehat{F}_m A_2  F_1 \cdots F_n \Omega = A_2 \eta.
\]
Since vectors of this form are dense in $\mc{H}$, the claim follows. If $A \in \alg{A}(\Lambda)''$, the statement follows in precisely the same way, since by locality we have $\alg{A}(\Lambda)'' \subset \alg{A}(\Lambda^c)'$.
\end{proof}

Consider now the algebra $\alg{A}(\Lambda^c)$ of observables localized in the complement of $\Lambda$. We want to show Haag duality, i.e. equation~\eqref{eq:haagd}, so $\alg{A}(\Lambda^c)'$ should map $\mc{H}_\Lambda$ into itself. This is indeed the case, as the following lemma demonstrates.
\begin{lemma}
	\label{lem:commutant}
	We have that $\alg{A}(\Lambda^c)' \mc{H}_\Lambda \subset \mc{H}_\Lambda$.
\end{lemma}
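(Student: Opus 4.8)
The plan is to identify $\mc{H}_\Lambda$ intrinsically as the subspace of vectors fixed by all star and plaquette operators supported away from the boundary in $\Lambda^c$, and then read off the desired invariance. First I would record the inclusion coming from the stabilizer relations. For a star $s$ with $\str(s)\subset\Lambda^c$ and a plaquette $p$ with $\plaq(p)\subset\Lambda^c$, the operators $A_s,B_p$ lie in $\alg{A}(\Lambda^c)$, hence commute with $\alg{A}(\Lambda)$ by locality. Combined with $A_s\Omega=\Omega=B_p\Omega$ from~\eqref{eq:stplaq} and the identification $\mc{H}_\Lambda=\overline{\alg{A}(\Lambda)\Omega}$ (which follows from the argument in Lemma~\ref{lem:denseset} applied inside $\Lambda$, since the single-bond string operators generate $\alg{A}(\Lambda)_{loc}$), this gives $A_s\eta=B_p\eta=\eta$ for every $\eta\in\mc{H}_\Lambda$. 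Now let $X\in\alg{A}(\Lambda^c)'$. Since $A_s,B_p\in\alg{A}(\Lambda^c)$, the operator $X$ commutes with them, so $A_sX\eta=XA_s\eta=X\eta$ and likewise for $B_p$; thus $X\eta$ is again fixed by every star and plaquette operator supported in $\Lambda^c$. The lemma will therefore follow once I prove the characterization
\[
\mc{H}_\Lambda=\{\psi\in\mc{H}:A_s\psi=\psi,\ B_p\psi=\psi\text{ for all }\str(s)\subset\Lambda^c,\ \plaq(p)\subset\Lambda^c\}.
\]

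The inclusion ``$\subseteq$'' is the computation just given. For ``$\supseteq$'', suppose $\psi$ is fixed by all these stabilizers and decompose $\psi=P_\Lambda\psi+\psi_1$ with $\psi_1\in\mc{H}_\Lambda^\perp$; as $P_\Lambda\psi\in\mc{H}_\Lambda$ is also fixed, so is $\psi_1$, and it suffices to show $\psi_1=0$. By Lemma~\ref{lem:denseset} it is enough to check $\psi_1\perp G\widehat{F}\Omega$ for $G\in\alg{A}(\Lambda)_{loc}$ and $\widehat{F}$ a product of operators in $\mc{F}_{\Lambda^c}$. Since $G$ commutes with the bulk stabilizers, $G\widehat{F}\Omega$ is a joint eigenvector with the same eigenvalues as $\widehat{F}\Omega$; if $\widehat{F}$ anticommutes with some bulk stabilizer the inner product vanishes by the usual eigenvalue argument, while $\psi_1$ has eigenvalue $+1$. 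The remaining, and genuinely substantive, case is when $\widehat{F}$ commutes with every star and plaquette operator supported in $\Lambda^c$.

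The heart of the proof, and the step I expect to be the main obstacle, is exactly this remaining case: I must show that if a product $\widehat{F}$ of string operators localized in $\Lambda^c$ commutes with all star and plaquette operators supported in $\Lambda^c$, then $\widehat{F}\Omega\in\mc{H}_\Lambda$. Writing $\widehat{F}=F_\zeta F_{\widehat{\zeta}}$ with $\zeta$ a direct and $\widehat{\zeta}$ a dual path-system in $\Lambda^c$, the commutation hypothesis forces every endpoint excitation of $\zeta$ and $\widehat{\zeta}$ to sit on the boundary of $\Lambda$. Because each boundary vertex (resp. boundary plaquette) is incident to a bond of $\Lambda$ and $\Lambda$ is connected, I can choose path-systems $\zeta',\widehat{\zeta}'\subset\Lambda$ with the same endpoints and set $G=F_{\zeta'}F_{\widehat{\zeta}'}\in\alg{A}(\Lambda)_{loc}$. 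Then $G^*\widehat{F}$ equals, up to a sign from reordering direct and dual strings, a product of a closed direct loop $\gamma=\zeta\,\triangle\,\zeta'$ and a closed dual loop $\widehat{\gamma}=\widehat{\zeta}\,\triangle\,\widehat{\zeta}'$. Both loops are finite, hence null-homologous in the plane, hence products of plaquette (resp. star) operators, and therefore fix $\Omega$. Since $G$ is unitary this yields $\widehat{F}\Omega=\pm G\Omega\in\alg{A}(\Lambda)\Omega\subset\mc{H}_\Lambda$, the sign being irrelevant inside a subspace.

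With this, $\psi_1\perp G\widehat{F}\Omega$ holds in all cases, so $\psi_1=0$, the characterization is established, and applying it to $X\eta$ completes the lemma. The delicate point requiring genuine care is the geometric claim that the boundary excitations of $\widehat{F}$ can always be reconnected by path-systems lying \emph{inside} the cone while keeping the residual loops null-homologous; this is precisely where the connectedness of $\Lambda$, its convex angular shape, and the exact definition of its boundary are used, and I would spend most of the write-up making this reconnection argument rigorous near the apex of the cone.
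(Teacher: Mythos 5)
Your proof is correct, and its hard core coincides with the paper's, but you organize it around a different key lemma, so a comparison is worthwhile. The paper fixes $B'\in\alg{A}(\Lambda^c)'$ and directly shows $(\eta,B'\zeta)=0$ for spanning vectors $\eta=\widehat{F}_1\cdots\widehat{F}_kF\Omega$ orthogonal to $\mc{H}_\Lambda$: either some star or plaquette operator supported in $\Lambda^c$ anticommutes with $\widehat{F}_1\cdots\widehat{F}_k$, and the eigenvalue flip kills the matrix element, or no such stabilizer exists, in which case (after eliminating loops and merging paths) all excitations lie on the boundary of $\Lambda$ and the strings can be traded for strings inside $\Lambda$, giving $\eta\in\mc{H}_\Lambda$. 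You repackage precisely this dichotomy as an intrinsic characterization, $\mc{H}_\Lambda=\{\psi : A_s\psi=\psi,\ B_p\psi=\psi \text{ for all stars and plaquettes in } \Lambda^c\}$, after which invariance under $\alg{A}(\Lambda^c)'$ is a one-line commutation. The genuinely nontrivial step is identical in both proofs: your reconnection of boundary excitations through $\Lambda$, with the residual closed direct and dual loops being null-homologous in the plane and hence products of plaquette or star operators fixing $\Omega$, is exactly the paper's replacement argument, and your $\mathbb{Z}_2$/symmetric-difference bookkeeping silently performs the paper's explicit loop-elimination and path-merging reductions. What your route buys: the characterization is independent of the particular commutant element and is reusable --- for instance it yields Corollary~\ref{cor:projection} at once, since the projection onto the joint fixed-point space lies in $\alg{A}(\Lambda^c)''$ --- and it makes explicit a point the paper leaves implicit, namely that the ``anticommuting stabilizer'' spanning vectors are themselves orthogonal to all of $\mc{H}_\Lambda$, which is needed to pass from a case analysis on spanning vectors to the conclusion about $\mc{H}_\Lambda^\perp$. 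The one step you defer, the rigorous construction of the reconnecting paths near the apex, is treated equally briefly in the paper, resting on the definition of the boundary (which guarantees that every boundary vertex or plaquette meets a bond of $\Lambda$) and on connectedness of the cone, so nothing is lost relative to the paper's level of rigor.
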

\begin{proof}
	Let $B' \in \alg{A}(\Lambda^c)'$. Suppose $\zeta = F_1 \cdots F_n \Omega$ with $F_i \in \mc{F}_\Lambda$ and let $\eta = \widehat{F}_1 \cdots \widehat{F}_k F \Omega$, where $\widehat{F}_i \in \mc{F}_{\Lambda^c}$ and $F$ is a product of operators in $\mc{F}_\Lambda$. We will show that $(\eta, B'\zeta) = 0$ if $\eta \in \mc{H}_\Lambda^\perp$. Since the span of such vectors $\zeta$ (resp. $\eta$) is dense in $\mc{H}_\Lambda$ (resp. $\mc{H}$), the claim will follow. Now suppose that there is star $s$ such that $s \subset \Lambda^c$ and such that $A_s$ anti-commutes with $\widehat{F}_1 \cdots \widehat{F}_k$. Then, by locality and equation~\eqref{eq:stplaq},
\[
	(\eta, B'\zeta) = (\eta, B' A_s \zeta) = (A_s \eta, B' \zeta) = -(\eta, B'\zeta),
\]
hence $\eta$ is orthogonal to $B'\zeta$. A similar argument works for plaquette operators $B_p \in \alg{A}(\Lambda^c)$.

The case remains where no such plaquette or star operator exists. We claim that in this case, in fact $\eta \in \mc{H}_\Lambda$. First of all, note that any loops formed by the paths $\widehat{\xi}_i$ (corresponding to $\widehat{F}_i$) can be eliminated. Indeed, if $\xi_1, \dots \xi_k$ forms a loop, then $\widehat{F}_1 \cdots \widehat{F}_k$ is a product of either star or plaquette operators (see the end of Section~\ref{sec:model}). By commuting them with the other operators, and using equation~\eqref{eq:stplaq}, these can be eliminated, possibly at the expense of an overall minus sign. Similarly, if some of the paths $\widehat{\xi}_i$ can be combined to a bigger path, we might as well replace the string operators with the string operator of the bigger path.

Arguing like this, without loss of generality we can assume that the $\widehat{F}_i$ all correspond to different paths with mutually disjoint endpoints. It follows that the star and plaquette operators based at these endpoints anti-commute with $\widehat{F}_1 \cdots \widehat{F}_k$. By the assumption on $\eta$, this implies that all endpoints must lie on the boundary of $\Lambda$. So suppose that $\widehat{\xi}_i$ is a path with endpoints on the boundary of $\Lambda$. Then there is a path $\xi_i'$ inside $\Lambda$ with the same endpoints. If $F_{i'}$ is the corresponding string operator, then $\widehat{F}_i \Omega = F_{i'} \Omega$. Continuing in this manner, it follows that $\eta = F F_{k'} \cdots F_{1'} \Omega$. Hence $\eta \in \mc{H}_\Lambda$, completing the proof. 
\end{proof}

Since the lemma implies that $P_\Lambda \in \alg{A}(\Lambda^c)''$, we obtain the following corollary.
\begin{corollary}
	\label{cor:projection}
The projection $P_\Lambda$ is contained in $\mc{R}_{\Lambda^c}$.
\end{corollary}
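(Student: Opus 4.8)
The plan is to exploit the standard correspondence between invariant subspaces and projections in a commutant. After the identification $\pi(A) = A$ we have $\mc{R}_{\Lambda^c} = \alg{A}(\Lambda^c)''$, so by von Neumann's double commutant theorem it suffices to show that the projection $P_\Lambda$ commutes with every element of the von Neumann algebra $\alg{A}(\Lambda^c)'$; this will place $P_\Lambda$ in $(\alg{A}(\Lambda^c)')' = \alg{A}(\Lambda^c)'' = \mc{R}_{\Lambda^c}$.

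First I would invoke Lemma~\ref{lem:commutant}, which gives exactly $\alg{A}(\Lambda^c)' \mc{H}_\Lambda \subset \mc{H}_\Lambda$; that is, the subspace $\mc{H}_\Lambda = P_\Lambda \mc{H}$ is invariant under $\alg{A}(\Lambda^c)'$. The key remaining observation is that $\alg{A}(\Lambda^c)'$ is self-adjoint as a commutant: for any $B' \in \alg{A}(\Lambda^c)'$ we also have $(B')^* \in \alg{A}(\Lambda^c)'$, so $\mc{H}_\Lambda$ is invariant under $(B')^*$ as well. Taking orthogonal complements, invariance of $\mc{H}_\Lambda$ under $(B')^*$ is equivalent to invariance of $\mc{H}_\Lambda^\perp$ under $B'$. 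Thus both $\mc{H}_\Lambda$ and $\mc{H}_\Lambda^\perp$ are left invariant by each $B'$, which is precisely the statement that $P_\Lambda B' = B' P_\Lambda$ for all $B' \in \alg{A}(\Lambda^c)'$.

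Having established that $P_\Lambda$ commutes with all of $\alg{A}(\Lambda^c)'$, I conclude $P_\Lambda \in (\alg{A}(\Lambda^c)')'$, and by the double commutant theorem this algebra equals $\alg{A}(\Lambda^c)'' = \mc{R}_{\Lambda^c}$, giving the claim. I do not expect a genuine obstacle here: all of the substantive work — the excitation-counting argument showing that $\alg{A}(\Lambda^c)'$ preserves $\mc{H}_\Lambda$, which rests on eliminating closed dual loops via~\eqref{eq:stplaq} and replacing boundary-crossing paths by paths inside $\Lambda$ — is already carried out in Lemma~\ref{lem:commutant}. The corollary is then only the routine translation of that invariance statement into membership of the spectral projection in the double commutant.
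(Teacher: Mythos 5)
Your proof is correct and is exactly the argument the paper intends: the paper derives the corollary in one line from Lemma~\ref{lem:commutant} by noting it implies $P_\Lambda \in \alg{A}(\Lambda^c)''$, and your write-up simply makes explicit the standard step that invariance of $\mc{H}_\Lambda$ under the $*$-closed set $\alg{A}(\Lambda^c)'$ forces $P_\Lambda$ to commute with it, hence $P_\Lambda \in (\alg{A}(\Lambda^c)')' = \alg{A}(\Lambda^c)'' = \mc{R}_{\Lambda^c}$. No gaps; this is the same route, with the routine details filled in.
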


We now consider $*$-algebras $\mc{A}_\Lambda$ and $\mc{B}_\Lambda$ acting on $\mc{H}_\Lambda$. Any operator $A \in \alg{A}(\Lambda)''$ restricts to an operator on $\mc{H}_\Lambda$ by Lemma~\ref{lem:unique}. Define an algebra $\mc{A}_\Lambda$ by restricting the operators of $\alg{A}(\Lambda)''$ to $\mc{H}_\Lambda$. This is in fact a von Neumann algebra, that is, $\mc{A}_\Lambda = \mc{A}_\Lambda''$ (as subalgebras of $\alg{B}(\mc{H}_\Lambda)$). This can be argued, for example, as in the proof of Prop. II.3.10 of Ref.~\cite{MR1873025}.

The algebra $\mc{B}_\Lambda$ is defined in a similar way: the operators in $P_\Lambda \mc{R}_{\Lambda^c} P_\Lambda$ leave $\mc{H}_\Lambda$ invariant, hence we can restrict $P_\Lambda \mc{R}_{\Lambda^c} P_\Lambda$ to a $*$-algebra acting on $\mc{H}_\Lambda$. This algebra will be denoted by $\mc{B}_\Lambda$ and is a von Neumann algebra by the proposition cited above. Note that both $\mc{A}_\Lambda$ and $\mc{B}_\Lambda$ act non-degenerately on $\mc{H}_\Lambda$ and that $\Omega$ is cyclic for $\mc{A}_\Lambda$.\footnote{In fact, one can show that $\Omega$ is separating for $\mc{B}_\Lambda$, but we will not need this fact.}  The self-adjoint part of $\mc{A}_\Lambda$ (resp. $\mc{B}_\Lambda$) is denoted by $\mc{A}_{\Lambda,s}$ (resp. $\mc{B}_{\Lambda,s})$. The following Lemma is the crucial step in the proof of Haag duality. 
\begin{lemma}
	\label{lem:dense}
	The set $\mc{A}_{\Lambda,s} \Omega + i \mc{B}_{\Lambda,s} \Omega$ is dense in $\mc{H}_\Lambda$.
\end{lemma}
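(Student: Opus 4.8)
The plan is to prove density by showing that the orthogonal complement of $\mc{A}_{\Lambda,s}\Omega + i\mc{B}_{\Lambda,s}\Omega$ in $\mc{H}_\Lambda$ is trivial. So suppose $\psi \in \mc{H}_\Lambda$ satisfies $(\psi, A\Omega) = 0$ for all $A \in \mc{A}_{\Lambda,s}$ and $(\psi, iB\Omega) = 0$ for all $B \in \mc{B}_{\Lambda,s}$. The strategy is to leverage the explicit structure of $\mc{H}_\Lambda$ as the closure of the span of vectors $F_1 \cdots F_k \Omega$ with $F_i \in \mc{F}_\Lambda$. Since these string operators are self-adjoint and lie in $\alg{A}(\Lambda)$, their restrictions lie in $\mc{A}_{\Lambda,s}$; hence the first family of conditions forces $\psi$ to be orthogonal to a dense subset of $\mc{H}_\Lambda$, except that we must be careful because $\mc{A}_{\Lambda,s}$ consists only of the \emph{self-adjoint} elements. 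The key point is that $\Omega$ is cyclic for $\mc{A}_\Lambda$, so $\mc{A}_\Lambda \Omega$ is dense, but decomposing a general element of $\mc{A}_\Lambda$ into self-adjoint parts mixes $\mc{A}_{\Lambda,s}\Omega$ with $i\mc{A}_{\Lambda,s}\Omega$, and the latter need not be controlled by the hypotheses.

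This suggests the real content: I would try to show that $i\mc{A}_{\Lambda,s}\Omega$ is already contained in (the closure of) $i\mc{B}_{\Lambda,s}\Omega$, or more precisely that the condition $(\psi, iB\Omega)=0$ for $B \in \mc{B}_{\Lambda,s}$ suffices to kill the imaginary contributions coming from $\mc{A}_\Lambda$. To do this I would exhibit, for each self-adjoint $A \in \mc{A}_{\Lambda,s}$, an element of $\mc{B}_{\Lambda,s}$ whose action on $\Omega$ agrees with $A\Omega$. The natural candidates again come from string operators: given a string operator $F_\xi$ with $\xi$ a path or dual path in $\Lambda$ whose endpoints lie on the boundary of $\Lambda$, one can find a corresponding path $\xi'$ in $\Lambda^c$ with the same endpoints, and by the endpoint-only dependence established at the end of Section~\ref{sec:model}, the two string operators have the same action on $\Omega$. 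Since $F_{\xi'} \in \alg{A}(\Lambda^c)$ lies in $\mc{R}_{\Lambda^c}$ and leaves $\mc{H}_\Lambda$ invariant, its compression $P_\Lambda F_{\xi'} P_\Lambda$ is an element of $\mc{B}_{\Lambda,s}$ acting on $\Omega$ exactly as $F_\xi$ does. This is the mechanism that ties the two algebras together at the boundary.

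Concretely, the key steps in order would be: first, reduce to showing that any $\psi \in \mc{H}_\Lambda$ orthogonal to both $\mc{A}_{\Lambda,s}\Omega$ and $i\mc{B}_{\Lambda,s}\Omega$ is orthogonal to all of $\mc{A}_\Lambda\Omega$, hence zero by cyclicity. Second, write a general $A \in \mc{A}_\Lambda$ as $A = A_1 + iA_2$ with $A_j \in \mc{A}_{\Lambda,s}$, so that $(\psi, A\Omega) = (\psi, A_1\Omega) + (\psi, iA_2\Omega)$; the first term vanishes by hypothesis, so it remains to handle $(\psi, iA_2\Omega)$. Third, approximate $A_2\Omega$ by vectors $F\Omega$ with $F$ a self-adjoint product of string operators in $\mc{F}_\Lambda$, using the definition of $\mc{H}_\Lambda$ and self-adjointness of the generators; then match each such $F\Omega$ with a vector $\widetilde{F}\Omega$ for $\widetilde{F} \in \mc{B}_{\Lambda,s}$ via the boundary path-replacement argument, so that $(\psi, iA_2\Omega)$ is approximated by terms $(\psi, i\widetilde{F}\Omega) = 0$.

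The main obstacle I anticipate is the path-replacement step and the bookkeeping it requires. Replacing a string operator in $\mc{F}_\Lambda$ by one in $\mc{F}_{\Lambda^c}$ only works cleanly when the path has both endpoints on the boundary of $\Lambda$; a generic self-adjoint element of $\mc{A}_\Lambda$ approximated by products $F_1\cdots F_k$ need not decompose into such boundary-anchored strings, and closed loops or interior excitations must be dealt with separately, likely by the same loop-elimination and equation~\eqref{eq:stplaq} arguments used in the proof of Lemma~\ref{lem:commutant}. Controlling these approximations uniformly while preserving self-adjointness, and ensuring the matched operators $\widetilde{F}$ genuinely land in $\mc{B}_{\Lambda,s}$ rather than merely in $\mc{R}_{\Lambda^c}$, is where the argument will demand the most care.
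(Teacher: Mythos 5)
There are two problems with your plan; the first is repairable, the second is a genuine gap. First, the annihilator argument has to be run in the \emph{real} Hilbert space: $\mc{A}_{\Lambda,s}\Omega + i\mc{B}_{\Lambda,s}\Omega$ is only a real-linear subspace, so its density is equivalent to the statement that $\operatorname{Re}(\psi, v) = 0$ for all $v$ in the set forces $\psi = 0$. With the complex pairing as you literally wrote it, the hypothesis $(\psi, A\Omega) = 0$ for all $A \in \mc{A}_{\Lambda,s}$ already makes $(\psi, i A_2 \Omega)$ vanish automatically (it is $\pm i(\psi, A_2\Omega)$), so your argument would terminate at step two --- but it would only prove density of the \emph{complex} span of $\mc{A}_{\Lambda,s}\Omega$, which follows from cyclicity of $\Omega$ alone (write $A = A_1 + iA_2$ with $A_j$ self-adjoint) and is strictly weaker than the lemma; the whole point of the real-subspace formulation is the Rieffel--van Daele criterion invoked afterwards. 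Your later remarks show you intend the real pairing, and with $\operatorname{Re}(\psi,\cdot)$ the reduction is sound: it suffices to show that \emph{all complex multiples} $\lambda F \Omega$, with $F$ a product of operators in $\mc{F}_\Lambda$, lie in the closure of the real-linear set.

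Second, the genuine gap: your only mechanism for producing the missing multiples $iF\Omega$ (for $F = F^*$) is the boundary path-replacement into $\mc{B}_{\Lambda,s}$, and you propose to handle ``interior excitations'' by the loop-elimination arguments of Lemma~\ref{lem:commutant}. But those arguments only remove closed loops and redundant paths; they cannot transfer an open string ending at an interior vertex or plaquette of $\Lambda$ into $\Lambda^c$, and indeed no $\widehat{F} \in \mc{B}_\Lambda$ with $\widehat{F}\Omega = F\Omega$ exists in that case, since compressions of operators in $\mc{R}_{\Lambda^c}$ cannot create excitations in the interior of $\Lambda$. The paper handles this case by a device absent from your proposal: if $F$ creates an interior excitation, there is a star operator $A_s$ (or plaquette operator $B_p$) contained in $\mc{A}_\Lambda$ that anti-commutes with $F$; then $iA_sF$ is again self-adjoint, hence lies in $\mc{A}_{\Lambda,s}$, and by equation~\eqref{eq:stplaq} one has $iA_sF\Omega = -iFA_s\Omega = -iF\Omega$, so the imaginary multiples are obtained inside $\mc{A}_{\Lambda,s}\Omega$ itself, with no appeal to $\mc{B}_\Lambda$. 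The replacement into $\mc{B}_{\Lambda,s}$ is reserved for exactly the complementary case, where no such anti-commuting operator exists in $\mc{A}_\Lambda$ and hence all excitations sit on the boundary of $\Lambda$. Moreover, the step you flag as ``demanding the most care'' --- that $F = F^*$ forces $\widehat{F} = \widehat{F}^*$ --- does require a proof, which the paper supplies via a parity argument: a loop on the lattice and a loop on the dual lattice always intersect an even number of times, so the (anti-)commutation relations among the replaced strings match those among the originals. Without the $iA_s$ phase-flip trick the interior case cannot be completed, so the proposal as it stands does not prove the lemma.
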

\begin{proof}
	First we observe that since $\mc{A}_s$ and $\mc{B}_s$ are real vector spaces, it is sufficient to show that vectors of the form $F\Omega$ and $i F \Omega$, where $F$ is a product of operators in $\mc{F}_\Lambda$, are contained in $\mc{A}_{\Lambda,s} \Omega + i \mc{B}_{\Lambda,s} \Omega$. So suppose that $F = F_1 \cdots F_n$ with $F_i \in \mc{F}_\Lambda$. Note that $F_i^* = F_i$, and that $F_i, F_j$ either commute or anti-commute. But this means that $F^* = \pm F$. If $F^* = F$, clearly $F \in \mc{A}_{\Lambda,s}$. In the other case  $i F$ is self-adjoint, hence $i F \in \mc{A}_{\Lambda,s}$. 

	Now suppose that there is either a star operator $A_s \in \mc{A}_\Lambda$ or a plaquette operator $B_p \in \mc{A}_\Lambda$ that anti-commutes with $F$. In the case that $F = F^*$, it follows that $i A_s F$ (or $i B_p F$) is self-adjoint. But $i A_s F \Omega = - i F A_s \Omega = -i F \Omega$, so that we can obtain real linear combinations of $i F \Omega$. In the case that $F^* = -F$, one can use the fact that $A_s F$ is self-adjoint to obtain \emph{real} multiples of $F \Omega$. Combining these results, we obtain vectors of the form $\lambda F \Omega$, with $\lambda \in \mathbb{C}$.

	One issue remains: operators $A_s$ or $B_p$ (contained in $\mc{A}_\Lambda$) that anti-commute with $F$ need not exist. But if this is the case, then $F\Omega$ can only have excitations at the boundary of $\Lambda$, by the same reasoning as in the proof of Lemma~\ref{lem:commutant}. By the same proof, note that there is $\widehat{F} \in \mc{B}_\Lambda$ such that $\widehat{F}\Omega = F \Omega$. One also sees that if $F = F^*$, then also $\widehat{F} = \widehat{F}^*$, arguing as follows. Let $F_1, F_2$ be the string operators corresponding to paths $\xi_1, \xi_2$ in $\Lambda$, with endpoints at the boundary of $\Lambda$. Now choose corresponding paths $\xi_1'$ and $\xi_2'$ in $\Lambda^c$ with path operators $F_{1'}$ and $F_{2'}$. If the paths $\xi_1, \xi_2$ are of the same type, $F_1$ and $F_2$ commute, and so will $F_{1'}$ and $F_{2'}$. If they are of different type, they commute if and only if $\xi_1$ and $\xi_2$ intersect an \emph{even} number of times. Otherwise they will anti-commute. Note that $\xi_1 \cup \xi_1'$ is a loop, and similarly for $\xi_2 \cup \xi_2'$. But a loop on the lattice and a loop on the dual lattice always intersect an \emph{even} number of times. From this it follows that if $\xi_1$ and $\xi_2$ intersect an even (odd) number of times, the same is true for $\xi_1'$ and $\xi_2'$. It follows that $F_1$ and $F_2$ (anti-)commute if and only if $F_{1'}$ and $F_{2'}$ do so. In other words, if $F_1 F_2$ (resp. $i F_1 F_2$) is self-adjoint, then so is $F_{1'} F_{2'}$ (resp. $i F_{1'} F_{2'}$). Continuing in this way, it is clear that complex multiples of $F\Omega$ are contained in $\mc{A}_{\Lambda,s} \Omega + i \mc{B}_{\Lambda,s} \Omega$, which finishes the proof.
\end{proof}

We are now in a position to prove the main theorem.
\begin{proof}[Proof of Theorem~\ref{thm:dual}]
	As was mentioned before, using locality one obtains the inclusion $\pi(\alg{A}(\Lambda))'' \subset \pi(\alg{A}(\Lambda^c))'$. To prove the reverse inclusion, we first note that $\mc{A}_\Lambda$ and $\mc{B}_\Lambda'$  generate each other's commutant (in $\alg{B}(\mc{H}_\Lambda))$, by Lemma~\ref{lem:dense} and a result of Rieffel and van Daele~\cite[Thm. 2]{MR0383096}, which says in fact that the claim on the commutants is equivalent to the statement in Lemma~\ref{lem:dense}. In other words, $\mc{A}_\Lambda = \mc{B}_\Lambda'$ as von Neumann algebras acting on $\mc{H}_\Lambda$.

	In order to prove $\pi(\alg{A}(\Lambda^c))' \subset \pi(\alg{A}(\Lambda))''$, first note that $\mc{B}_\Lambda$ is the reduced von Neumann algebra $(\mc{R}_{\Lambda^c})_{P_\Lambda}$, obtained by restricting $P_\Lambda \mc{R}_{\Lambda^c} P_\Lambda$ to $\mc{H}_\Lambda$. Consider an element $B' \in \mc{R}_{\Lambda^c}'$. By~\cite[Prop. II.3.10]{MR1873025}, the commutant of $\mc{B}_\Lambda$ is equal to $\mc{R}_{\Lambda^c}'$ restricted to $\mc{H}_\Lambda$. Write $B_\Lambda'$ for the restriction of $B'$ to $\mc{H}_\Lambda$. Then $B_\Lambda' \in \mc{B}_\Lambda' = \mc{A}_\Lambda'' = \mc{A}_\Lambda$. By Lemma~\ref{lem:unique} and the remarks following Corollary~\ref{cor:projection}, there is a unique $\widehat{A} \in \mc{R}_\Lambda$ such that $\widehat{A}|_{\mc{H}_\Lambda} = B'_\Lambda$. Let $\xi= \widehat{F}F\Omega \in \mc{H}$, where $\widehat{F}$ (resp. $F$) is a product of operators in $\mc{F}_{\Lambda^c}$ (resp. $\mc{F}_\Lambda$). Then
\[
B' \xi = \widehat{F} B' F \Omega = \widehat{F} B'_{\Lambda} F \Omega = \widehat{F} \widehat{A} F \Omega = \widehat{A} \widehat{F} F \Omega = \widehat{A}\xi,
\]
so that $\widehat{A} = B'$ and hence $B' \in \pi(\alg{A}(\Lambda))'' = \mc{R}_\Lambda$.
\end{proof}

\section{Distal split property}
If $\Lambda$ is a cone, the von Neumann algebra $\mc{R}_\Lambda$ is a factor of Type $\rm{II}_\infty$ or Type III~\cite{toricendo}. If we have two cones $\Lambda_1 \subset \Lambda_2$, then clearly $\mc{R}_{\Lambda_1} \subset \mc{R}_{\Lambda_2}$. The \emph{distal split property} then says that if the boundaries of the cones $\Lambda_1$ and $\Lambda_2$ are well separated, then there is in fact a Type I factor $\mc{N}$ sitting between these two algebras, $\mc{R}_{\Lambda_1} \subset \mc{N} \subset \mc{R}_{\Lambda_2}$. To make this precise, we recall the following definition~\cite{toricendo}:
\begin{definition}
For two cones $\Lambda_1 \subset \Lambda_2$, write $\Lambda_1 \ll \Lambda_2$ if any star or plaquette in $\Lambda_1 \cup \Lambda_2^c$ is either contained in $\Lambda_1$ or in $\Lambda_2^c$. We say that $\omega$ satisfies the \emph{distal split property for cones} if for any pair of cones $\Lambda_1 \ll \Lambda_2$ there is a Type I factor $\mc{N}$ such that $\mc{R}_{\Lambda_1} \subset \mc{N} \subset \mc{R}_{\Lambda_2}$.
\end{definition}

For the toric code model we are considering, the distal split property in fact follows from Haag duality~\cite[Thm. 5.2]{toricendo}. Here we give another, more direct proof. For the remainder of this section, fix two cones $\Lambda_1 \ll \Lambda_2$. The idea is to use a unitary operator $U$ to write $\mc{H}$ as a tensor product of three Hilbert spaces, in such a way that $U \mc{R}_{\Lambda_1} U^*$ acts on the first tensor factor. Similarly, $U \mc{R}_{\Lambda_2^c} U^*$ acts on the second tensor factor, and from this one can find an interpolating Type I factor. 

There is some redundancy in the description of the Hilbert space $\mc{H}$ as the linear span of vectors obtained by acting with path operators on the ground state vector $\Omega$. For example, as mentioned before, $F_{\xi_1} \Omega = F_{\xi_2} \Omega$ if $\xi_1$ and $\xi_2$ are paths with the same endpoints. This is rather inconvenient when defining operators acting on $\mc{H}$, and therefore we will find a more economical description.

To achieve this, we will have to choose certain paths in $\Lambda_0 := \bonds \setminus (\Lambda_1 \cup \Lambda_2^c)$. Note that this set is non-empty, since $\Lambda_1 \ll \Lambda_2$. Choose a point in the lattice on the boundary of $\Lambda_1$, one on the boundary of $\Lambda_2$, and a path $\xi_1^b \subset \Lambda_0$ between these points. Similarly, choose plaquettes on the boundary of $\Lambda_1$, respectively $\Lambda_2$, and a dual path $\xi_2^b \subset \Lambda_0$ between these plaquettes. Label the vertices and plaquettes in the interior of $\Lambda_0$ (i.e. those vertices and plaquettes not on the boundary of $\Lambda_1$ or $\Lambda_2^c$) by a set $I$. If $I$ is non-empty, fix a vertex $v$ and a plaquette $p$ in $I$. Let $\xi_v$ and $\xi_p$ be paths in $\Lambda_0$ from $v$ (resp. $p$) to the boundary of $\Lambda_1$. For each $i \in I\setminus\{v,p\}$, choose a path inside $\Lambda_0$ from $i$ to either $v$ or $p$. Thus we have obtained a collection $\Gamma := \{\xi_1^b, \xi_2^b \} \cup \{\xi_i : i \in I\}$ of paths. For each $\xi \in \Gamma$ there is the corresponding path operator $\widehat{F}_\xi$.

\begin{definition}
	Let $\{ \widehat{F}_\xi \}_{\xi \in \Gamma}$ be as above and set $\alg{F}_0 = \{ F_{\xi_1} \cdots F_{\xi_k} : \xi_i \in \Gamma\}$. The Hilbert space $\mc{H}_0$ is defined as the closure of $\linspan \alg{F}_0 \Omega$.
\end{definition}

The  dimension of $\mc{H}_0$ depends on the number of stars and plaquettes there are in the region $\Lambda_2 \cap \Lambda_1^c$. In general this means that $\mc{H}_0$ is infinite dimensional. However, one can consider, for example, a cone $\Lambda_2$ based in the origin and bounded by the lines $y = x$ and $y = -x$ (any of the four possibilities will do). If one chooses $\Lambda_1$ to be the cone with parallel edges such that the distance between the two apexes is one, then $\Lambda_1 \ll \Lambda_2$ and $\Lambda_2 \cap \Lambda_1^c$ contains no stars or plaquettes. In this case, $\mc{H}_0$ is finite-dimensional: $\alg{F}_0$ consists of $I$ and the operators corresponding to the chosen path and dual path (and their product). Hence $\mc{H}_0$ has dimension four.

The construction of $\mc{H}_0$ is perhaps somewhat involved, but it suggests a convenient description of $\mc{H}$. Analogously to $\alg{F}_0$, we define the set $\alg{F}_{\Lambda_1}$ by $\alg{F}_{\Lambda_1} = \{ F_1 \cdots F_n : F_i \in \mc{F}_{\Lambda_1} \}$ and in the same way $\alg{F}_{\Lambda_2^c}$.
\begin{lemma}
The set $\linspan \alg{F}_{\Lambda_1} \alg{F}_0 \alg{F}_{\Lambda_2^c} \Omega$ is dense in $\mc{H}$.
	\label{eq:canform}
\end{lemma}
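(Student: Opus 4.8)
The plan is to reduce everything to single‑bond operators and then to re‑route strings in the middle region $\Lambda_0$ along the chosen paths $\Gamma$. First I would invoke Lemma~\ref{lem:denseset}: its proof shows that the single‑bond string operators $\sigma_i^x = F_{\widehat{\xi}}$ and $\sigma_i^z = F_\xi$ (paths of length one) span $M_2(\mathbb{C})$ on each bond and hence generate all local operators, so the vectors $F\Omega$ with $F$ a finite product of such single‑bond operators are total in $\mc{H}$. Since the bonds are partitioned as $\bonds = \Lambda_1 \cup \Lambda_0 \cup \Lambda_2^c$ and single‑bond Pauli operators on distinct bonds commute, any such $F$ factors as $F = F^{(1)} F^{(0)} F^{(2)}$ with $F^{(1)}$ supported in $\Lambda_1$, $F^{(0)}$ in $\Lambda_0$ and $F^{(2)}$ in $\Lambda_2^c$; here $F^{(1)} \in \alg{F}_{\Lambda_1}$ and $F^{(2)} \in \alg{F}_{\Lambda_2^c}$. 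As $F^{(1)}, F^{(0)}, F^{(2)}$ commute pairwise, it suffices to write $F^{(0)}\Omega$ as a combination of vectors $LMN\Omega$ with $L \in \alg{F}_{\Lambda_1}$, $M \in \alg{F}_0$, $N \in \alg{F}_{\Lambda_2^c}$, and then reabsorb $F^{(1)}$ into the $\Lambda_1$‑factor and $F^{(2)}$ into the $\Lambda_2^c$‑factor using commutativity across the three regions.

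The heart of the argument is rewriting $F^{(0)}\Omega$. I would write $F^{(0)} = F_c F_{\widehat{c}}$, where $c$ is the $\sigma^z$‑chain and $\widehat{c}$ the $\sigma^x$‑chain in $\Lambda_0$; these create star excitations at the vertices of the boundary of $c$ and plaquette excitations at the plaquettes of the boundary of $\widehat{c}$, and the two cases are handled identically using lattice paths (resp.\ dual paths). The two facts from Section~\ref{sec:model} that I would use are: a closed $\sigma^z$‑loop (resp.\ $\sigma^x$‑loop) is a product of plaquette (resp.\ star) operators and hence fixes $\Omega$ by~\eqref{eq:stplaq}; and $F_\xi\Omega$ depends only on the endpoints of $\xi$. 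The idea is to route every excitation site out of $\Lambda_0$ along the chosen paths: an interior site in $I$ is carried to the boundary of $\Lambda_1$ along the tree of paths $\xi_i$ (through $v$ or $p$) followed by $\xi_v, \xi_p$, while sites that must be connected across $\Lambda_0$ are joined by $\xi_1^b$ or $\xi_2^b$. Concretely, adjoining to $c$ an appropriate union of these chosen paths together with closing paths lying in $\Lambda_1$ and in $\Lambda_2^c$ produces a cycle; since the associated operator fixes $\Omega$, this expresses $F_c\Omega$ (up to a scalar) as $F_{c_1} F_{c_0} F_{c_2}\Omega$ with $c_1 \subset \Lambda_1$, $c_0$ built from chosen paths in $\Lambda_0$, and $c_2 \subset \Lambda_2^c$, i.e.\ precisely in the form $LMN\Omega$.

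Two points need care. Excitation sites lying on the boundary of $\Lambda_1$ or of $\Lambda_2$ are absorbed into the adjacent region exactly as in Lemma~\ref{lem:commutant}: a boundary endpoint lets the string be replaced by a same‑endpoint string inside $\Lambda_1$ (resp.\ $\Lambda_2^c$), and here the hypothesis $\Lambda_1 \ll \Lambda_2$ guarantees that stars and plaquettes do not straddle the boundaries, so this deformation is available. The various reorderings of $\sigma^x$‑ and $\sigma^z$‑strings only produce scalar factors $\pm 1, \pm i$, which is harmless since we are proving density of a linear span. I expect the main obstacle to be purely combinatorial bookkeeping: organizing a general chain $c$ (rather than a single bond) into segments, pairing up its excitations, and verifying that the routing uses only the paths of $\Gamma$ inside $\Lambda_0$ while all leftover string genuinely lies in $\Lambda_1 \cup \Lambda_2^c$; the degenerate case $I = \varnothing$ (so that $\Gamma = \{\xi_1^b, \xi_2^b\}$) should be checked separately but is easier. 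Collecting the contributions over all single‑bond factors $F$ and passing to closures then yields the claim.
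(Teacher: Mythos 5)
Your argument is correct, and it rests on the same two pillars as the paper's own proof --- closed loops and closed dual loops fix $\Omega$ by \eqref{eq:stplaq}, and $F_\xi\Omega$ depends only on the endpoints of $\xi$, so strings may be rerouted through the chosen system $\Gamma$ --- but the reduction is organized genuinely differently. The paper also starts from Lemma~\ref{lem:denseset}, but then uses the fact that string operators pairwise commute or anti-commute to reduce, one factor at a time, to a \emph{single} path operator $F_\xi$ acting on $\Omega$, and disposes of it by a case analysis on where the two endpoints of $\xi$ lie (in $\Lambda_1$, in $\Lambda_2^c$, or in $\Lambda_0$), splicing in $\xi_{v_1}\cup\xi_v$, $\xi_1^b$ or $\xi_2^b$, and closing paths inside $\Lambda_1$ and $\Lambda_2^c$ as needed. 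You instead factor a monomial of single-bond Paulis over the partition $\bonds=\Lambda_1\cup\Lambda_0\cup\Lambda_2^c$ (exact commutation, no signs, since the supports are disjoint) and then reroute the whole $\Lambda_0$-chain at once; this is where your parity and pairing bookkeeping comes from, since you must simultaneously match an arbitrary even set of excitation sites: interior sites funneled through the tree of paths to $v$ or $p$, a single use of $\xi_1^b$ (resp.\ $\xi_2^b$) to repair the cross-boundary parity, and leftover boundary sites paired off inside $\Lambda_1$, respectively $\Lambda_2^c$. This mod-2 argument is sound --- in the plane every cycle bounds, so the difference of two chains with equal boundary is a product of plaquette (or star) operators fixing $\Omega$ --- and it buys a uniform, homological treatment with no endpoint case analysis; the paper's per-path reduction buys exactly the opposite: each string has only two endpoints, so the global bookkeeping you flag as the main obstacle collapses to a short list of cases and never has to be confronted. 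One minor correction: the hypothesis $\Lambda_1\ll\Lambda_2$ enters this lemma only through the nonemptiness of $\Lambda_0$, i.e.\ the existence of $\Gamma$; the no-straddling property you invoke for the boundary-absorption step is really needed later, in the proof that \eqref{eq:unitary} is isometric, while here (as in Lemma~\ref{lem:commutant}) one only needs that boundary sites of a cone can be reached by paths or dual paths inside it, respectively inside its complement.
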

\begin{proof}
	By Lemma~\ref{lem:denseset}, vectors of the form $F_{\xi_1} \cdots F_{\xi_n} \Omega$ span a dense subset of $\mc{H}$. Note that we can permute the order of the operators $F_{\xi_i}$, possibly at the expense of an overall sign. But this implies that it is enough to show that for a path $\xi$, $F_{\xi}\Omega$ is of the desired form. Suppose for the sake of argument that $\xi$ is a path on the lattice. If both endpoints of the path --- call them $v_1$ and $v_2$ --- are in either $\Lambda_1$ or $\Lambda_2^c$, the claim is clear. If $v_1$ is in $\Lambda_0$ and $v_2$ in $\Lambda_1$ or $\Lambda_2^c$, consider the path $\xi_{v_1} \cup \xi_v$ from $v_1$ to the boundary of $\Lambda_1$. If $v_2$ is in $\Lambda_1$, choose a path $\tilde{\xi}$ from this boundary point to $v_2$. Then we have $F_{\xi} \Omega = \widehat{F}_{\xi_{v_1}} \widehat{F}_{\xi_v} F_{\tilde{\xi}} \Omega$, which is of the desired form. If $v_2$ is in $\Lambda_2^c$ then one can form the following path: first go from $v_1$ to the boundary of $\Lambda_1$ as above. Then choose a path in $\Lambda_1$ from the endpoint of $\xi_{v}$ to the endpoint of either $\xi_1^b$ or $\xi_2^b$ and use this path to go to $\Lambda_2^c$. From there one can choose a path from the boundary to $v_2$ and we are done. The remaining cases can be handled in a similar way.
\end{proof}

The proof actually implies that every vector of the form $F_{\xi_1} \cdots F_{\xi_n} \Omega$ can be written (up to an overall sign) as $F_1 \widehat{F} F_2 \Omega$. We say that a vector is in \emph{canonical form} if it is represented in this way. The point is that some of the redundancy in the description is removed: if $F_1 \widehat{F} F_2 \Omega = \pm F_1' \widehat{F}' F_2' \Omega$ for $F_1,F_1' \in \alg{F}_{\Lambda_1}, F_2,F_2' \in \alg{F}_{\Lambda_2^c}$ and $\widehat{F}, \widehat{F}' \in \alg{F}_0$ then in fact $\widehat{F} = \pm \widehat{F}'$.
\begin{lemma}
	Suppose that $\Lambda_1 \ll \Lambda_2$ are two cones. If $F_1 \widehat{F} F_2 \Omega$ is in canonical form, define
\begin{equation}
	U F_1 \widehat{F} F_2 \Omega = F_1 \Omega \otimes F_2 \Omega \otimes \widehat{F} \Omega.
	\label{eq:unitary}
\end{equation}
Then $U$ extends to a unitary operator $\mc{H} \to \mc{H}_{\Lambda_1} \otimes \mc{H}_{\Lambda_2^c} \otimes \mc{H}_0$, where $\mc{H}_{\Lambda_1}, \mc{H}_{\Lambda_2^c}$, and $\mc{H}_0$ are the Hilbert spaces defined above.
\end{lemma}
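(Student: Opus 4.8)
The plan is to check the three things that make $U$ unitary: that formula~\eqref{eq:unitary} is well defined on the dense subspace $\linspan\alg{F}_{\Lambda_1}\alg{F}_0\alg{F}_{\Lambda_2^c}\Omega$ of $\mc{H}$ furnished by Lemma~\ref{eq:canform}, that $U$ preserves the inner product, and that its range is dense. Well-definedness and isometry I would extract simultaneously from the single identity
\begin{equation}
\langle F_1\widehat{F}F_2\Omega, F_1'\widehat{F}'F_2'\Omega\rangle = \langle F_1\Omega, F_1'\Omega\rangle\,\langle F_2\Omega, F_2'\Omega\rangle\,\langle\widehat{F}\Omega,\widehat{F}'\Omega\rangle,
\label{eq:splitip}
\end{equation}
valid for any two canonical-form vectors. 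Granting \eqref{eq:splitip}, a termwise expansion gives $\norm{\sum_k c_k\,F_1^{(k)}\Omega\otimes F_2^{(k)}\Omega\otimes\widehat{F}^{(k)}\Omega}^2 = \norm{\sum_k c_k\,F_1^{(k)}\widehat{F}^{(k)}F_2^{(k)}\Omega}^2$ for every finite combination; specializing to combinations that vanish in $\mc{H}$ shows the right-hand side of \eqref{eq:unitary} is independent of the chosen canonical representative, and the same identity says $U$ is isometric, hence extends by continuity to all of $\mc{H}$. Surjectivity is then free: the range contains every elementary tensor $F_1\Omega\otimes F_2\Omega\otimes\widehat{F}\Omega$, and these span a dense subspace of $\mc{H}_{\Lambda_1}\otimes\mc{H}_{\Lambda_2^c}\otimes\mc{H}_0$ by the definitions of the three factors, so $U$ is an isometry onto a dense, hence closed, subspace.

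Thus everything reduces to \eqref{eq:splitip}. The key structural input is that $\Lambda_1$, $\Lambda_0=\bonds\setminus(\Lambda_1\cup\Lambda_2^c)$ and $\Lambda_2^c$ are pairwise disjoint, so string operators attached to different regions have disjoint supports and commute with no sign. Writing the left side of \eqref{eq:splitip} as $\omega\big((F_1\widehat{F}F_2)^*F_1'\widehat{F}'F_2'\big)$ and commuting the factors into their respective regions turns it into $\omega(XYZ)$, where $X=F_1^*F_1'$, $Y=\widehat{F}^*\widehat{F}'$ and $Z=F_2^*F_2'$ are supported in $\Lambda_1$, $\Lambda_0$ and $\Lambda_2^c$ respectively, each being, up to a sign, a tensor product of Pauli operators. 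Since the right side of \eqref{eq:splitip} is exactly $\omega(X)\,\omega(Y)\,\omega(Z)$, the identity is equivalent to the factorization $\omega(XYZ)=\omega(X)\,\omega(Y)\,\omega(Z)$.

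For the factorization I would use the elementary fact, which follows from Lemma~\ref{lem:state} and \eqref{eq:stplaq} just as in the closed-loop discussion of Section~\ref{sec:model}, that a signed Pauli product $W$ satisfies $\omega(W)\in\{0,\pm1\}$, with $\omega(W)=0$ as soon as $W$ anticommutes with some star or plaquette, and $W\Omega=\pm\Omega$ (so $\omega(W)=\pm1$) when $W$ commutes with all of them. The easy half is immediate: if $X$, $Y$ and $Z$ each commute with every stabilizer then $X\Omega=\pm\Omega$, and likewise for $Y,Z$, so $XYZ\Omega=\pm\Omega$ with the signs multiplying and $\omega(XYZ)=\omega(X)\,\omega(Y)\,\omega(Z)$. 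The other half I would prove contrapositively: if one of the three, say $X$, anticommutes with some star or plaquette, I must exhibit a single stabilizer anticommuting with the whole product $XYZ$, forcing $\omega(XYZ)=0$ to match the vanishing right-hand side.

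This converse is where the hypotheses do their work and where I expect the main obstacle to be. A stabilizer $S$ anticommuting with $X$ must have a bond in $\Lambda_1$. If $S\subset\Lambda_1$ it commutes with $Y$ and $Z$ and we are done. The separation assumption $\Lambda_1\ll\Lambda_2$ excludes the dangerous case of a stabilizer meeting both $\Lambda_1$ and $\Lambda_2^c$: such an $S$ would be contained in $\Lambda_1\cup\Lambda_2^c$ yet lie in neither piece, contrary to $\Lambda_1\ll\Lambda_2$. Hence $S$ commutes with $Z$ and can only straddle the boundary between $\Lambda_1$ and $\Lambda_0$; the one remaining danger is that it also anticommutes with $Y$, so that it commutes with $XYZ$ after all. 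Controlling this is exactly why the middle factor is taken from the restricted family $\alg{F}_0$ generated by the chosen paths $\Gamma$ rather than from all of $\alg{A}(\Lambda_0)$: since every path of $\Gamma$ has its loose ends routed to the boundaries of $\Lambda_1$ and $\Lambda_2$, the operators $Y$ create excitations only on those boundaries, and tracking the endpoints of the combined string $XYZ$ should produce an uncancelled excitation whenever any factor is nontrivial. That this restriction is essential can be seen from the fact that the factorization fails for unrestricted $\Lambda_0$-strings: a single closed loop crossing all three regions splits as $F_1 F' F_2$ with $\omega(F_1F'F_2)=1$ but $\omega(F_1)=0$, the point being that its middle piece $F'$ is not of the form prescribed by the canonical form. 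Making the endpoint-tracking precise from the explicit construction of $\Gamma$ is the crux of the argument.
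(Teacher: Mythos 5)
Your skeleton --- deriving well-definedness and isometry simultaneously from a single factorization identity, then getting surjectivity from density of the range --- is sound and matches the overall strategy of the paper, and your reduction to $\omega(XYZ)=\omega(X)\,\omega(Y)\,\omega(Z)$ with $X,Y,Z$ supported in $\Lambda_1$, $\Lambda_0$, $\Lambda_2^c$ is legitimate, since strings in the three disjoint regions commute without signs. But the proposal is not a proof: the entire content of the lemma sits in the ``hard half'' you defer --- showing $\omega(XYZ)=0$ whenever some factor anti-commutes with a stabilizer --- and you end by declaring the endpoint-tracking ``the crux'' without carrying it out. The danger you flag is real, not hypothetical: take $Y=\widehat{F}^*\widehat{F}'$ containing the bridge factor $F_{\xi_1^b}$, and $X$ a string in $\Lambda_1$ ending at the boundary endpoint of $\xi_1^b$; the star based there anti-commutes with \emph{both} $X$ and $Y$, hence commutes with $XYZ$, so no single ``local'' stabilizer of the kind you describe kills the product, and one genuinely needs a global parity count to locate a different uncancelled excitation. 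A smaller slip: your appeal to $\Lambda_1\ll\Lambda_2$ to exclude stabilizers meeting both $\Lambda_1$ and $\Lambda_2^c$ is imprecise, since such a stabilizer need not be contained in $\Lambda_1\cup\Lambda_2^c$ --- it could also have bonds in $\Lambda_0$ --- and the definition of $\ll$ only constrains stabilizers contained in that union.

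The paper closes exactly the gap you leave open, by a case organization worth comparing with yours. It first splits on whether $\widehat{F}=\pm\widehat{F}'$. If not, it shows \emph{both} sides vanish, using the special structure of $\alg{F}_0$: either $\widehat{F}$ and $\widehat{F}'$ differ in the excitations they create at some interior vertex or plaquette of $\Lambda_0$, and then a stabilizer there --- which by $\Lambda_1\ll\Lambda_2$ commutes with everything in $\Lambda_1$ and $\Lambda_2^c$ --- anti-commutes with exactly one of them; or no interior site distinguishes them, in which case a bridge factor $F_{\xi_1^b}$ or $F_{\xi_2^b}$ occurs in exactly one of $\widehat{F},\widehat{F}'$, and then one vector has an \emph{odd} and the other an \emph{even} number of excitations in $\Lambda_1$ or at its boundary (and likewise for $\Lambda_2^c$), so some stabilizer sees an excitation of one vector but not the other, forcing orthogonality --- this parity count is precisely the endpoint-tracking your proposal postpones. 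If instead $\widehat{F}=\pm\widehat{F}'$, then since $\widehat{F}$ is unitary the middle factor cancels, and only the two-region identity $\omega(F_1^*F_1'F_2^*F_2')=\omega(F_1^*F_1')\,\omega(F_2^*F_2')$ remains, where your boundary danger cannot arise and the residual sign comparison in the nonvanishing case (both sides equal to $\pm1$, both strings products of closed loops) follows from the commutation of the disjointly supported operators --- a point your sketch also does not address. So your plan is completable, but the missing parity argument \emph{is} the proof rather than a technicality; as written, the proposal establishes only the easy direction of the factorization.
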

\begin{proof}
	We first prove that $U$ defines an isometry, from which it is clear that $U$ is well-defined. Suppose that $\eta_1 = F_1 \widehat{F} F_2 \Omega$ and $\eta_2 = F_1' \widehat{F}' F_2' \Omega$ are in canonical form. It is enough to show that $(\eta_1, \eta_2) = (U \eta_1, U \eta_2)$. First suppose that $\widehat{F} \neq \pm \widehat{F}'$. Then there is some star or plaquette operator that commutes with $\widehat{F}$, but anti-commutes with $\widehat{F}'$ (or vice-versa), hence $\omega(\widehat{F}^*\widehat{F}) = 0$, and therefore $(U \eta_1, U\eta_2) = 0$. We claim that in this case $(\eta_1, \eta_2) = 0$. If there is a vertex or plaquette in the interior of $\Lambda_0$ where $\widehat{F}$ creates an excitation but $\widehat{F}$ doesn't (or vice versa), this equality is clear since then there is a star (or plaquette) operator that commutes with $\mc{R}_{\Lambda_1}$ and $\mc{R}_{\Lambda_2^c}$, but anti-commutes with either $\widehat{F}$ or $\widehat{F}'$. So suppose that this is not the case. Then $F_{\xi_1^b}$ or $F_{\xi_2^b}$ is necessarily a factor in either $\widehat{F}$ or $\widehat{F}'$, say $\widehat{F}$. But then $F_1 \widehat{F} F_2 \Omega$ has an \emph{odd} number of excitations localized in $\Lambda_1$ or at its boundary. The same holds for $\Lambda_2^c$. On the other hand, $F_1' \widehat{F}' F_2' \Omega$ has an \emph{even} number of excitations it both these regions. So there must be at least one place where one vector has an excitation and the other one does not. But this implies that $(\eta_1, \eta_2) = 0$ as before.

Hence without loss of generality we can assume that $\widehat{F} = \widehat{F}'$ and the problem reduces to showing that $\omega(F_1^* F_1' F_2^* F_2') = \omega(F_1^* F_1') \omega(F_2^* F_2')$. This equality can be obtained as follows: if there is a star or plaquette operator that anti-commutes with any of the operator $F_i, F_i'$ and commutes with the others, both sides are zero by the same reasoning as used before. If this is not the case, this implies that $F_1^* F_2$ and $\widehat{F}_1^*\widehat{F}_2$ correspond to products of path operators of closed loops, and it follows that both sides are equal to plus or minus one. The sign has to be equal at both sides, since $F_1,F_1'$ and $F_2,F_2'$ commute. The range of $U$ is clearly dense in $\mc{H}_{\Lambda_1} \otimes \mc{H}_{\Lambda_2^c} \otimes \mc{H}_0$, hence $U$ extends to a unitary operator.
\end{proof}

This unitary gives the desired decomposition of $\mc{H}$ as a tensor product of Hilbert spaces. The proof of the main theorem of this section now amounts to showing that $\mc{R}_{\Lambda_1}$ and $\mc{R}_{\Lambda_2}$ act on this tensor product in the desired way. 
\begin{theorem}
Suppose that $\Lambda_1 \ll \Lambda_2$ and let $U$ be the unitary defined as above. If $\mc{N} = U^* \left( \alg{B}(\mc{H}_{\Lambda_1}) \otimes I \otimes I \right)U$, then $\mc{N}$ is a Type I factor such that $\mc{R}_{\Lambda_1} \subset \mc{N} \subset \mc{R}_{\Lambda_2}$.
\end{theorem}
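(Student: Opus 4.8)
The statement has two parts: that $\mc{N}$ is a Type I factor, and that it interpolates the two cone algebras. The first part is immediate: by construction $\mc{N}$ is unitarily equivalent to $\alg{B}(\mc{H}_{\Lambda_1}) \otimes I \otimes I$, which is $\alg{B}$ of the first tensor leg and hence a Type I factor (its commutant is $\mathbb{C} I \otimes \alg{B}(\mc{H}_{\Lambda_2^c} \otimes \mc{H}_0)$, so its centre is trivial), and unitary conjugation preserves this. All the work is therefore in the two inclusions. My plan is to compute how $U$ conjugates the string-operator generators of $\mc{R}_{\Lambda_1}$ and of $\mc{R}_{\Lambda_2^c}$, and then pass to weak closures. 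The essential mechanism is that the three regions $\Lambda_1$, $\Lambda_0 = \bonds \setminus (\Lambda_1 \cup \Lambda_2^c)$, and $\Lambda_2^c$ partition $\bonds$, so that a string operator localized in one of them commutes with string operators localized in the other two.

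For $\mc{R}_{\Lambda_1} \subset \mc{N}$ it is enough to treat a generator $F_0 \in \mc{F}_{\Lambda_1}$, since $\alg{A}(\Lambda_1)_{loc}$ is generated by such operators and is weakly dense in $\mc{R}_{\Lambda_1}$, while $X \mapsto U X U^*$ is a normal isomorphism and $\alg{B}(\mc{H}_{\Lambda_1}) \otimes I \otimes I$ is weakly closed. Applying $F_0$ to a canonical-form vector $F_1 \widehat{F} F_2 \Omega = U^*(F_1\Omega \otimes F_2\Omega \otimes \widehat{F}\Omega)$, I note that $F_0 F_1 \in \alg{F}_{\Lambda_1}$, so $F_0 F_1 \widehat{F} F_2 \Omega$ is again in canonical form with the middle and right factors untouched. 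Using Lemma~\ref{lem:unique} for the cone $\Lambda_1$, this gives $U F_0 U^* (F_1\Omega \otimes F_2\Omega \otimes \widehat{F}\Omega) = (F_0|_{\mc{H}_{\Lambda_1}} F_1 \Omega) \otimes F_2\Omega \otimes \widehat{F}\Omega$, that is, $U F_0 U^* = (F_0|_{\mc{H}_{\Lambda_1}}) \otimes I \otimes I$ on a dense set, hence everywhere. Taking weak closures yields $U \mc{R}_{\Lambda_1} U^* \subset \alg{B}(\mc{H}_{\Lambda_1}) \otimes I \otimes I$, which is exactly $\mc{R}_{\Lambda_1} \subset \mc{N}$.

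For $\mc{N} \subset \mc{R}_{\Lambda_2}$ I would instead show that $\mc{N}$ commutes with $\mc{R}_{\Lambda_2^c}$ and invoke Haag duality (Theorem~\ref{thm:dual}) for the cone $\Lambda_2$, which gives $\mc{R}_{\Lambda_2^c}' = \pi(\alg{A}(\Lambda_2^c))' = \pi(\alg{A}(\Lambda_2))'' = \mc{R}_{\Lambda_2}$. The computation mirrors the previous one: for a generator $G_0 \in \mc{F}_{\Lambda_2^c}$ and a canonical-form vector, $G_0$ has support disjoint from those of $F_1$ and $\widehat{F}$, so it commutes with both and $G_0 F_1 \widehat{F} F_2 \Omega = F_1 \widehat{F} (G_0 F_2)\Omega$ with $G_0 F_2 \in \alg{F}_{\Lambda_2^c}$, again canonical. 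This yields $U G_0 U^* = I \otimes (G_0|_{\mc{H}_{\Lambda_2^c}}) \otimes I$, and by weak closure $U \mc{R}_{\Lambda_2^c} U^* \subset I \otimes \alg{B}(\mc{H}_{\Lambda_2^c}) \otimes I$. Since $\alg{B}(\mc{H}_{\Lambda_1}) \otimes I \otimes I$ and $I \otimes \alg{B}(\mc{H}_{\Lambda_2^c}) \otimes I$ commute, $U \mc{N} U^* = \alg{B}(\mc{H}_{\Lambda_1}) \otimes I \otimes I$ commutes with $U \mc{R}_{\Lambda_2^c} U^*$; hence $\mc{N} \subset \mc{R}_{\Lambda_2^c}' = \mc{R}_{\Lambda_2}$.

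I expect the main obstacle to be careful bookkeeping around the canonical form rather than any deep step. Specifically, one must verify that left-multiplying a canonical vector by a generator localized in $\Lambda_1$ (respectively $\Lambda_2^c$) really returns a canonical vector with the other two tensor factors unchanged, which is where the partition of $\bonds$ into $\Lambda_1, \Lambda_0, \Lambda_2^c$ and the commuting bookkeeping of disjointly supported string operators enters. One should also check that the restrictions $F_0|_{\mc{H}_{\Lambda_1}}$ and $G_0|_{\mc{H}_{\Lambda_2^c}}$ are well defined, which follows from the invariance statement of Lemma~\ref{lem:unique} and its evident analogue for $\Lambda_2^c$. The only genuinely external ingredient is Haag duality, used once at the very end to convert the commutation statement $\mc{N} \subset \mc{R}_{\Lambda_2^c}'$ into the desired inclusion $\mc{N} \subset \mc{R}_{\Lambda_2}$.
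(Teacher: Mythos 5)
Your proposal is correct and follows essentially the same route as the paper: both compute the conjugation by $U$ on canonical-form vectors to place $\mc{R}_{\Lambda_1}$ in the first tensor leg and the algebra of the complement in the second, and both ultimately rest on Haag duality for the cone $\Lambda_2$ (the paper invokes it as $\mc{R}_{\Lambda_2}' = \mc{R}_{\Lambda_2^c}$ in order to establish the equality $U \mc{R}_{\Lambda_2}' U^* = I \otimes \mc{R}_{\Lambda_2}' P_{\Lambda_2^c} \otimes I$ and then takes commutants, whereas you invoke it once at the end to convert commutation with $\mc{R}_{\Lambda_2^c}$ into containment in $\mc{R}_{\Lambda_2}$). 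Your variants --- working with string-operator generators plus weak closure instead of general elements of the von Neumann algebras, and proving only the inclusions rather than the paper's tensor-form equalities --- are sound and amount to cosmetic differences.
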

\begin{proof}
	It is clear that $\mc{N}$ is a Type I factor, hence it remains to show the inclusions. We will show that $U \mc{R}_{\Lambda_1} U^* = \mc{R}_{\Lambda_1} P_{\Lambda_1} \otimes I \otimes I$ and similarly $U \mc{R}_{\Lambda_2}' U^* = I \otimes \mc{R}_{\Lambda_2}' P_{\Lambda_2^c} \otimes I$, where $\mc{R}_{\Lambda_1} P_{\Lambda_1}$ is the von Neumann algebra $\mc{R}_{\Lambda_1}$ restricted to $\mc{H}_{\Lambda_1}$. It follows that $\mc{R}_{\Lambda_1} \subset \mc{N}$. For the second inclusion, note that 
$$U \mc{R}_{\Lambda_2}'' U^* = (I \otimes \mc{R}_{\Lambda_2}' P_{\Lambda_2^c} \otimes I)' = \alg{B}(\mc{H}_{\Lambda_1}) \otimes P_{\Lambda_2^c} \mc{R}_{\Lambda_2}'' P_{\Lambda_2^c} \otimes \alg{B}(\mc{H}_0),$$
and hence $\mc{N} \subset \mc{R}_{\Lambda_2}'' = \mc{R}_{\Lambda_2}$.

Note that if $\eta \in \mc{H}_{\Lambda_1}$ and $F \in \mc{F}_{\Lambda_2^c}, \widehat{F} \in \mc{F}_0$ then $\widehat{F}F\eta \in \mc{H}$ and by definition $U \widehat{F} F \eta = \eta \otimes F\Omega \otimes \widehat{F}\Omega$ and similarly for $\eta \in \mc{H}_{\Lambda_2^c}$. To finish the proof, first recall that by Lemma~\ref{lem:unique}, $\mc{R}_{\Lambda_1} \mc{H}_{\Lambda_1} \subset \mc{H}_{\Lambda_1}$. In a similar way one shows that $\mc{R}_{\Lambda_2}' = \mc{R}_{\Lambda_2^c}$ maps $\mc{H}_{\Lambda_2^c}$ into itself. Now, suppose that $A \in \mc{R}_{\Lambda_1}$ and $\eta := F_1\Omega \otimes F_2 \Omega \otimes \widehat{F}\Omega \in \mc{H}_{\Lambda_1} \otimes \mc{H}_{\Lambda_2^c} \otimes \mc{H}_0$. By locality $A$ commutes with $F_2$ and $\widehat{F}$. One then finds
\[
\begin{split}
U A U^* &\eta = U A F_1 \widehat{F} F_2 \Omega = U \widehat{F} F_2 A F_1 \Omega = U \widehat{F} F_2 P_{\Lambda_1} A P_{\Lambda_1} F_1 \Omega\\
&= A|_{\Lambda_1} F_1 \Omega \otimes F_2 \Omega \otimes \widehat{F} \Omega = \left( A|_{\Lambda_1} \otimes I \otimes I\right) \eta.
\end{split}
\]
Since vectors of the form $\eta$ span a dense set, the claim for $U \mc{R}_{\Lambda_1} U^*$ follows. A similar argument then shows the corresponding claim for $\mc{R}_{\Lambda_2}'$, which concludes the proof.
\end{proof}
One can in fact set $\mc{N}_1 := \mc{N}$ and $\mc{N}_2 := U^* (\alg{B}(\mc{H}_{\Lambda_1}) \otimes I \otimes \alg{B}(\mc{H}_0))U$ and it follows that $\mc{R}_{\Lambda_1} \subset \mc{N}_1 \subset \mc{N}_2 \subset \mc{R}_{\Lambda_2}$. This inclusion of \emph{two} Type I factors is also found in the case of the free neutral massive scalar field in algebraic quantum field theory, discussed by Buchholz~\cite[Corr. 2.4]{MR0345546}.

Note that in the case that $\mc{R}_{\Lambda_1}$ and $\mc{R}_{\Lambda_2}$ are semi-finite, the construction here is an explicit example of the construction in the proof of~\cite[Cor. 1(iv)]{MR703083}. Indeed, consider $\mc{R}_{\Lambda_1} \otimes \mc{R}_{\Lambda_2}'$. Then there is an amplification $\mc{R}_{\Lambda_1} \otimes \mc{R}_{\Lambda_2}' \otimes I$ acting on the Hilbert space $\mc{H} \otimes \mc{H} \otimes \mc{H}$. Let $P_0$ be the projection onto $\mc{H}_0$. If one reduces the amplification by the projection $P_{\Lambda_1} \otimes P_{\Lambda_2^c} \otimes P_0 \in \mc{R}_{\Lambda_1}' \otimes \mc{R}_{\Lambda_2} \otimes \alg{B}(\mc{H})$ and conjugates with the unitary $U$, one obtains a normal faithful representation of $\mc{R}_{\Lambda_1} \otimes \mc{R}_{\Lambda_2}'$ onto $\mc{R}_{\Lambda_1} \vee \mc{R}_{\Lambda_2}'$.

\vspace{\baselineskip}
\noindent\emph{Acknowledgements:} This research is funded by the Netherlands Organisation for Scientific Research (NWO) grant no.~613.000.608. The author wishes to thank Klaas Landsman, Michael M\"uger and Reinhard Werner for valuable feedback on the manuscript.

\bibliographystyle{abbrv}
\bibliography{refs.bib}
\end{document}